\documentclass[11pt,twoside]{article}
\usepackage{inputenc}

\usepackage{fullpage}
\usepackage{ifthen}
\usepackage{epsf}
\usepackage{fancyheadings}
\usepackage{hyperref}
\usepackage{url}
\usepackage{enumerate}

 \usepackage{amsthm}
 \usepackage{amsfonts}
 \usepackage{amsmath}
 \usepackage{amssymb}
 \usepackage{color}
\usepackage{microtype}
\usepackage{multicol}
 \usepackage{amsfonts}
 \usepackage{amsmath}
 \usepackage{amssymb}

% Graphics
\usepackage{graphics}
\usepackage{graphicx}
\usepackage{mathrsfs}
\usepackage{float}

\usepackage{epsf}
 \usepackage{subfigure}
 \usepackage{caption}

\usepackage{amsmath}
\usepackage{verbatim}
\usepackage{fixltx2e}
\usepackage{amssymb}
\usepackage[ruled, vlined, linesnumbered]{algorithm2e}
\usepackage{color}
\usepackage{graphicx}
\usepackage{multirow}

\newtheorem{definition}{Definition}
\newtheorem{theorem}{Theorem}
\newtheorem{lemma}{Lemma}
\newcommand{\myparagraph}[1]{\noindent{\scshape \bfseries #1.}}
\newenvironment{protocol}[1][htb]
  {% Update algorithm name
   \begin{algorithm}[#1]%
  }{\end{algorithm}}  
\def\tbd{\textcolor{red}{[TBD]}}

\def\server{\mathcal{S}}
\def\client{\mathcal{C}}
\def\transaction{\mathcal{T}}
\def\itemset{\mathcal{I}}
\def\Algo1{\tbd Algorithm 1}

\newcommand{\e}[1]{\textcolor{blue}{#1}}

% This mylist environment indents items, and saves less space than the above.
\newcounter{myctr}
\newenvironment{mylist}{\begin{list}{(\textbf{\arabic{myctr}})}
{\usecounter{myctr}
\setlength{\topsep}{1mm}\setlength{\itemsep}{0.5mm}
\setlength{\parsep}{1mm}
\setlength{\itemindent}{1mm}\setlength{\partopsep}{0mm}
\setlength{\labelwidth}{-2mm}
\setlength{\leftmargin}{0mm}}}{\end{list}}

% Space saving List environment for itemizing.

%%%%%

% Formatting commands
\newlength{\widebarargwidth}
\newlength{\widebarargheight}
\newlength{\widebarargdepth}

\setlength{\textwidth}{\paperwidth}
\addtolength{\textwidth}{-6cm}
\setlength{\textheight}{\paperheight}
\addtolength{\textheight}{-4cm}
\addtolength{\textheight}{-1.1\headheight}
\addtolength{\textheight}{-\headsep}
\addtolength{\textheight}{-\footskip}
\setlength{\oddsidemargin}{0.5cm}
\setlength{\evensidemargin}{0.5cm} 
 
%%% New version of \caption puts things in smaller type, single-spaced 
%%% and indents them to set them off more from the text.
\makeatletter
\long\def\@makecaption#1#2{
        \vskip 0.8ex
        \setbox\@tempboxa\hbox{\small {\bf #1:} #2}
        \parindent 1.5em 
        \dimen0=\hsize
        \advance\dimen0 by -3em
        \ifdim \wd\@tempboxa >\dimen0
                \hbox to \hsize{
                        \parindent 0em
                        \hfil 
                        \parbox{\dimen0}{\def\baselinestretch{0.96}\small
                                {\bf #1.} #2
                                %%\unhbox\@tempboxa
                                } 
                        \hfil}
        \else \hbox to \hsize{\hfil \box\@tempboxa \hfil}
        \fi
        }
\makeatother

%%%%%

\title{Privacy-preserving Targeted Advertising}

\author{Theja Tulabandhula$^{1*}$, Shailesh Vaya$^{2}$ and Aritra Dhar$^{3}$\\
$^{1}$University of Illinois at Chicago\\
$^{2}$The Aquarian Inventors\\
$^{3}$ETH Zurich\\
$^{*}$email: theja@uic.edu}
\date{}

\begin{document}

\maketitle

\thispagestyle{plain}
\pagestyle{plain}

\begin{abstract}
Recommendation systems form the center piece of a rapidly growing trillion dollar online advertisement industry. Even with numerous optimizations and approximations, collaborative filtering (CF) based approaches require real-time computations involving very large vectors.
Curating and storing such related profile information vectors on web portals seriously breaches the user's privacy. Modifying such systems to achieve private recommendations further requires communication of long encrypted vectors, making the whole process inefficient. We present a more efficient recommendation system alternative, in which user profiles are maintained entirely on their device, and appropriate recommendations are fetched from web portals in an efficient privacy preserving manner. We base this approach on association rules.
\end{abstract}

%\keywords{Private recommendation; Privacy preserving protocols; Largest subset containment search; Association rules; Homo-morphic encryption}

\section{Introduction}
\label{sec:intro}

Targeted advertising (TA) uses keywords, their frequencies, link structure of the web, user interests/demographics, recent/overall buying histories etc. to deliver personalized  advertisements~\cite{korolova2010privacy}. TA is enabled by (unique) cookies (random hash-maps) stored on the user's devices~\cite{mayer2012third}. When a cookie is retrieved from it, the web-server storing this string can recall the profile of the user associated/stored with it at its end. This mechanism constitutes a serious breach of privacy as it allows the websites to build very elaborate profile of the user at their end~\cite{nyt,bit,steel2010facebook}. 
This leads to the question: can we achieve TA without employing the \emph{cookie} mechanism? Any such alternate approach would require the user to maintain their profile on their own device, and use it for interactively computing and fetching the appropriate TA from the server. In fact, solving this problem effectively can be considered a highly valuable contribution~\cite{udo2001privacy}. We present a solution framework for this very important commercial problem of wide-spread interest. In the presentation, we focus on the crux of the solution: a \textit{privacy preserving recommendation system}.

We present a novel approach for privacy preserving recommendation system (RS), which shifts most of the computational work to a \textit{pre-processing} stage. In the query processing phase, i.e., when we need to provide recommendations to a user according to their profile, most of the computation is done by the server in our proposed system, while the client merely computes and exchanges encryptions of a few messages. We will use recommender systems terminology in the paper, wherein items can also denote ads. Our system is based on selection and application of \textit{association rules} (AR), to produce an ordered list of recommended items. ARs \cite{agrawal1994fast} capture the relation that if a user has already bought a set of items $p$ (called the \emph{antecedent}), then she is very likely to also buy another set of items $q$ (called the \emph{consequent}). ARs are mined from large databases of historical user purchase data on the server-side, and often filtered to retain only the most meaningful insights~\cite{sarwar2000analysis}. While CF methods~\cite{sarwar2001item} extract and use pairwise marginal statistics from the joint probability distribution of user preferences over items, ARs take into account more complex summary statistics of the joint probability distribution, which capture information about much larger collections of items. Most computation in a CF based system is done both offline as well as online when the recommendation is computed. However, ARs are generated at an offline pre-processing stage, and query-processing merely requires selection and application of the right AR interactively. If done correctly, this process can have lower communication complexity in a privacy-preserving system. Lastly, note that CF encounters more bottlenecks when privacy preserving properties are desired: (a) Explicit and consistent feedback from users are hard to obtain, store, organize and compile, given that they are to be attached to the right user ID; and (b) the spatio-temporal profiles of users can change continuously leading to inaccurate targeted recommendations from a CF system. Because ARs are user-agnostic (they only track population level statistics), they can be more robust to these two issues.

In order to build a privacy preserving recommendation system using ARs, we first devise meaningful criteria for selection of the most relevant ARs given a query (user profile) and to form an ordered list of recommended items. We then present fast exact and approximate algorithms whose outputs satisfy these criteria. We provide their privacy preserving versions in which the client does not reveal its profile (we will use the word \emph{transaction} borrowed from the AR literature to denote a profile) to the server, and the server does not reveal its database of ARs to the client. Finally, we present experimental results to demonstrate the practicality of our solutions, in terms of the latency overheads incurred due to privacy, in e-commerce and other domains.

\subsection*{Problem Statement and Contributions}
There is a client $\client$, which has a set of items, referred to as a transaction set $\mathcal{T} \in\; 2^{\mathcal{I}}$, with $\mathcal{I}$ being the universal item set and $2^{\mathcal{I}}$ denoting the power set. The transaction set $\mathcal{T}$ represents the current profile of a user. A server $\mathcal{S}$ processes a large historical transaction database and stores a database of association rules $D$, in the form of $\{p_i \rightarrow q_i\}_{i=1}^{|D|}$. Here $|\cdot|$ represents the size of a set/database and $p_i$ and $q_i$ are sets (belonging to $2^{\mathcal{I}}$) that define the i$^{th}$ AR. In the \emph{non-private version} of the problem, given a single transaction $\mathcal{T}$, the server $\mathcal{S}$ computes and sends recommended items that are based on consequents of matching association rules, where \emph{matching} is defined suitably. For instance, if there are multiple ARs that are applicable to the transaction, then an ordered list of recommended items is prepared by collating the items recommended by each of the multiple ARs taking into consideration the (optional) weights attached to these rules (for instance, these can be lift, conviction, Piatetsky-Shapiro etc. or directly optimized for recommendation accuracy), or such ordering can also be based on the items in the input transaction, which may be assigned a weight according to some monotonously decreasing function of time lapsed since this item was active (e.g., purchased). In the \emph{privacy-preserving version}, given an input transaction $\transaction$, defined as an ordered list of items, held by the $\client$, and a database of ARs ($D$) held by the $\server$, the $\client$ and $\server$ privately and interactively compute the most relevant item, or ordered list of items, to be recommended to $\client$.

Below is a list of our contributions:
\begin{mylist}
\item \textbf{Criteria:} We formulate \textit{several criteria} for selecting the appropriate set of association rules. A rule is applicable if its antecedent is contained in the transaction. These criteria are differentiated using parameters such as threshold weight $w$, which is used to eliminate all rules below the given threshold weight, antecedent length threshold $t$, which is used to eliminate all rules above the given length threshold, and parameter $k$, which is used to select the \emph{top}-$k$ association rules under a specified ordering. We relate these criteria to a newly defined set optimization problem called the \emph{Generalized Subset Containment Search} (GSCS). We show how the GSCS problem is a strict super-set of the well known Maximum Inner Product Search (MIPS) criterion \cite{shrivastava2014asymmetric}.

\item \textbf{Algorithms:} We develop efficient exact and approximate algorithms for computing recommended items based on the criteria above. Exact implementations build on a novel two-level hashing based data structure that stores the ARs in a manner so that their antecedents can be appropriately matched, and corresponding consequent(s) can be efficiently fetched. The key benefit of the data structure is that it provides a weak form of privacy by itself, and is readily amenable to the privacy protocols mentioned below. Our implementations are parallelizable, and can exploit multi-threading machines. We also design a novel fast randomized approximation algorithm for GSCS that fetches applicable ARs based on Locality Sensitive Hashing (LSH) \cite{datar2004locality} and hashing based algorithms for MIPS \cite{neyshabur2015symmetric}.

\item \textbf{Privacy-preserving 2-party Protocols:}
  Next, we design communication efficient privacy preserving protocols corresponding to the above exact and approximate algorithms. These protocols are based on \emph{oblivious transfer} and straightforward to implement. Further, the protocol for the approximation algorithm can be easily extended to embed many other large scale data processing tasks that rely on LSH (for instance, record linkage, data cleaning and duplicity detection problems \cite{shrivastava2015asymmetric} to name a few). Finally, we  extensively evaluate the impact that adding privacy has in terms of latency in recommending items. We emphasize that these latencies are manageable for reasonably sized databases (e.g., $\sim 10^4$ ARs, see Section~\ref{sec:eval}) and practical for certain targeted advertising settings. We do note that achieving truly web-scale targeted advertising ($100-1000X$ larger problem instances) under the design choices made in our ARs based recommendation system, while not impossible, would need further research.
\end{mylist}

\subsection*{Related Work}
The GSCS problem introduced in Section \ref{sec:approxalgo} is similar to other popular search problems on sets including the Jaccard Similarity (JS) problem, and vector space problems such as the Nearest-Neighbor (NN) and the Maximum Inner Product Search (MIPS) problems.  For these related problems, solutions based on hashing techniques such as LSH are already available. For instance, one can find a set maximizing Jaccard Similarity with a query set using a technique called \textsc{Minhash}. The NN problem can be addressed using \textsc{L2LSH}~\cite{datar2004locality} and variants. The MIPS problem can be solved approximately using methods such as \textsc{L2-ALSH(SL)}~\cite{shrivastava2014asymmetric} and \textsc{Simple-LSH}~\cite{neyshabur2015symmetric} among others. Note that the GSCS problem is different from all of these, and also different from the similar sounding Maximum Containment Search problem defined in \cite{shrivastava2015asymmetric}. The problem in~\cite{shrivastava2015asymmetric} is equivalent to the MIPS problem while the former is not. In this paper, we give a new approximate algorithm to the GSCS problem using hashing techniques listed above.

Privacy preserving recommendation systems have been well studied in the past. For instance, privacy based solutions for different types of collaborative filtering systems have been proposed in \cite{Li:2011:PPC:2052138.2052428,McSherry:2009:DPR:1557019.1557090,Zhang:2006:PCF:1134707.1134742}. Roughly in that setting, the problem reduces to computing the dot product of a matrix with a vector of real numbers, where the (recommendation) matrix is possessed by the server and the client possesses the vector, and both the client and the server are interested in preserving the privacy of their data. Since embedding such schemes into a privacy protocol based on cryptography is difficult, many solutions resort to data modification and adding noise. For instance, in \cite{Agrawal:2001:DQP:375551.375602}, the authors propose a perturbation based method for preserving privacy in data mining problems. This approach is only applicable when one is interested in aggregate statistics and does not work when more fine-grained privacy is needed. In  \cite{Berkovsky:2007:EPP:1297231.1297234}, the authors propose a decentralized distributed storage scheme along with data perturbation to achieve certain notions of privacy in the collaborative filtering setting. For the same setting, a method based on perturbations is also proposed in \cite{Polat:2005:SCF:1066677.1066860} and \cite{Canny:2002:CFP:564376.564419}. In the paper \cite{aimeur2008alambic}, the authors proposed a theoretical approach for a system called Alambic which splits customer data between the merchant and a semi-trusted third party. The security assumption is that these parties do not collude to reveal customers' data. A major difference between our work and all these solutions is that we base our privacy solutions on cryptographic primitives (notably \emph{oblivious transfer}) and build specific protocols that work with association rules. This is attractive because ARs are already heavily used in practice for exploratory analysis in the industry. In particular, we propose one of the first practical distributed privacy preserving protocols for recommendation systems based on selection and application association rules. Note that privacy has also been well studied in the context of generation of association rules from historical transaction data~\cite{vaidya2002privacy,rizvi2002maintaining}, but not much for the problem of their selection and application in a recommendation or targeted advertising context.

%Yao's garbled circuit vs GMW's interactive computation per gate

Our work is closely related to the literature on secure two-party computation~\cite{hazay2010efficient,pinkas2009secure}. In this model, two players with independent inputs want to compute a function of the union of their inputs while not revealing their own inputs to the other party. GMW and Yao~\cite{goldreich1987play,yao1986generate} prove feasibility of secure two party computation (assuming honest-but-curious parties) and are based on a Boolean circuit computing the desired function, although because of their generality, they require a lot of communication. On the other hand, specialized protocols (for instance, our private protocols) for promiment classes of problems (for instance, targeted advertising or recommendations) are worth designing because they can reduce privacy overheads considerably. Similar to these protocols, our solutions also implement the same functionality as a trusted third party. Secure sorting is integral part of our solution, and has been previously studied in the literature~\cite{jonsson2011secure}. Note that our work is also distinct from Trusted Execution Environments (TEE) such as Intel SGX, ARM TrustZone etc. The latter provide the integrity and secrecy of computation by placing all executions on the isolated encrypted memory. Our work addresses not just the resultant secure two-party computation problem, but also allows dealing with large number of association rules via LSH based indexing, which sets its apart from the rest of the literature. Because LSH based retrieval is similar to a standard database query retrieval problem, we are able to design privacy protocols for approximate retrieval as well as exact retrieval of item recommendations using the same primitives.
These protocols consider only \emph{honest-but-curious} type of corruptions of the involved parties, in which the parties cannot deviate from the main protocol, but can try to glean whatever extra information they can using the transcripts of execution of the protocol.

\subsection*{Overview}	

In Sections~\ref{sec:criteria}, we present different criteria for the selection of applicable association rules, assuming that the rules have been mined beforehand from historical transaction data on the server-side. While the rules could be mined in a privacy preserving manner as well, we sidestep this aspect here (see related works above for prior solutions for this). While the accuracy (or any other information retrieval measure) of the recommendations will be dependent on the rule mining step in conjunction with the selection criteria, we take a decoupled approach in this paper to isolate the impact of introducing privacy on the recommendation system at query-time. So for the first step, we assume that an off-the shelf rule miner (such as SPMF that implements Apriori/FPGrowth) has already been used to generate rules, and for the second step, we proceed with designing rule selection criteria optimized for query-time performance. In this setting, rules can be optimized for recommendation performance in the first step itself by assuming a weighted decision-list model class and optimizing the weights of the model by minimizing a suitable recommendation error. A rule with a higher weight can signify higher importance and should be considered first while recommending, and our selection criteria take this into account while fetching the most applicable ARs (see \emph{ordering functions} in Section~\ref{sec:criteria}). We present new approximate and exact algorithms that implement the proposed criteria in Sections \ref{sec:approxalgo} and \ref{sec:exactalgo} respectively. We then develop their privacy preserving counterpart protocols in Section \ref{sec:private-protocols}. In Section~\ref{sec:eval} we describe experimental evaluations that validate the practicality of the proposed algorithms and their privacy preserving versions in moderate-scale recommendation systems. It is important to note that neither AR based nor CF based recommendations can dominate each other in terms of recommendation quality, and their performances depend on the specific datasets. Hence, we don't pursue this comparison in this paper. Our conclusions are presented in Section \ref{sec:conc}.

\section{Recommendation Criteria}
\label{sec:criteria}

We have a set of $D$ association rules $\{p_i \rightarrow q_i\}_{i=1}^{|D|}$ with $p_i,q_i \in 2^{\itemset}$, and additional attributes (for instance, a weight attribute $w_i$ could denote an interestingness measure or the rule's importance in recommendation quality).  Given a query transaction $\transaction \in 2^{\itemset}$ of the client, we perform two steps: a \textit{Fetch} operation followed by a \textit{Collate} operation. We propose multiple criteria for deciding which association rules should be fetched. We also define a search problem called the \textbf{Generalized Subset Containment Search (GSCS)} and relate it  to one of the criteria. Our algorithm in Section \ref{sec:approxalgo} solves the GSCS problem in a computationally efficient albeit approximate manner. In Section~\ref{sec:exactalgo}, we give linear-time query and space efficient exact algorithms that are easily adaptable to a private protocol design in Section~\ref{sec:private-protocols}.

\subsection*{Fetch Step: Selection of Rules}
We define that an association rule $i$ is \emph{applicable} to a transaction $\transaction$ if and only if $p_i \subseteq \transaction$. The selection criteria are differentiated based on parameters such as threshold weight $w \in \mathbb{Z}_{+}$, which is used to eliminate all applicable association rules below the threshold weight, $t \in \mathbb{Z}_{+}$, which is used to eliminate rules with antecedent lengths greater than $t$, and $k \in \mathbb{Z}_{+}$, which is used to select the top-$k$ association rules according to a predefined notion of ordering specified using an ordering function $f:\{1,...,D\} \rightarrow \mathbb{Z}$.\\

\myparagraph{Ordering Functions} The function $f$ determines which of the applicable rules are the top $k$ rules. We can define $f$ such that rules can be ordered according to: (a) their weights $w_i$ (e.g., $f(i) = w_i$), or (b) antecedent lengths $|p_i|$ (e.g., $f(i) = |p_i|$) or, (c) a combination of both (e.g., $f(i) = g_1(w_i) + g_1(w_{\textrm{max}})\cdot g_2(|p_i|)$, where $g_1$ and $g_2$ are strictly monotonic integer-valued functions and $w_{\text{max}} = \max_{i=1,..,|D|}w_i$). For a pair of rules with antecedents $p_1$ and $p_2$ and weights $w_1$ and $w_2$, this latter function has the following properties: (i)  If $|p_1| < |p_2|$, then $f(1) \leq f(2)$, and (ii) If $|p_1| = |p_2|$ and $w_1 \leq w_2$, then $f(1) \leq f(2)$. Another example of a combination ordering function is $f(i) = g_2(|p_i|) + g_2(|\itemset|)\cdot g_1(w_i)$, which prefers weights first and then lengths in case of ties.\\

\myparagraph{The Criteria} The \textsc{TOP}-\textsc{Assoc}($k,w,t,f$) criterion outputs a set of applicable rules base on three parameters and an ordering function $f$. Parameter $w$ filters out rules with weights $\leq w$. Parameter $t$ retains rules with antecedents of length $\leq t$. Parameter $k \in \mathbb{Z}_{+}$ controls the number of applicable rules that are finally output. We can write the following optimization formulation representing this criterion as follows: $\max_{x \in \{0,1\}^{|D|}} \;\;\sum_{i}^{|D|}x_i\cdot f(i)$ such that $\sum_{i}x_i \leq k,$ and $x_i \leq \min\{\mathbf{1}[|p_i| \leq t],\mathbf{1}[w_i \geq w],\mathbf{1}[p_i \subseteq \transaction]\}$. Here, $x_i$ is a binary decision variable that can take a value in $\{0,1\}$ and indicates whether an applicable rule is selected. The inequality $x_i \leq \min\{\mathbf{1}[|p_i| \leq t],\mathbf{1}[w_i \geq w],\mathbf{1}[p_i \subseteq \transaction]\}$ is a mathematical programming notation for the following constraint: set  variable $x_i$ to $1$ if and only if $|p_i| \leq t, w_i \geq w$ and $p_i \subseteq \transaction$. The term $\mathbf{1}[expr]$ denotes an indicator function that takes the value $1$ if the $expr$ is true, and $0$ otherwise. Since $x_i$ is constrained to be less than the minimum of three indicator functions, it can only take a value $1$ if all three evaluate to $1$. Otherwise, $x_i$ has to necessarily take the value $0$.

\textit{\textsc{TOP-1-Assoc}($f$) criterion} is a special case of \textsc{TOP}-\textsc{Assoc}($\allowbreak k,\allowbreak w,t,f$) where $k = 1$, $w = 0$ and $t = |\itemset|$. And \textsc{TOP-K-Assoc}($\allowbreak k,f$) \emph{criterion} is the special setting where $w = 0$ and $t = |\itemset|$. \textit{\textsc{ALL-Assoc}($w,t$) criterion} is another special case where $k = |D|$, in which case $f$ does not matter. Finally, under the \textsc{ANY}-\textsc{Assoc}($k,w,t$) criterion, the output contains at most $k$ applicable association rules with weights $\geq w$ and rule antecedent lengths $\leq t$. The corresponding optimization formulation is $\max_{x \in \{0,1\}^{|D|}} \;\;\sum_{i}x_i$ such that $\sum_{i}x_i \leq k,$ and $x_i\leq \min\{\mathbf{1}[|p_i| \leq t],\mathbf{1}[w_i \geq w],\mathbf{1}[p_i \subseteq \transaction]\}$. To summarize, \textsc{TOP}-\textsc{Assoc}, \textsc{TOP-1-Assoc}, \textsc{TOP-K-Assoc}, \textsc{ALL-Assoc}, and \textsc{ANY}-\textsc{Assoc} are some of the selection criteria we propose. \\

\myparagraph{Generalized Subset Containment Search} The \textsc{TOP-1-Assoc}($f$) criterion with an $f()$ that (all else being equal) prefers longer applicable rules leads to two new search problems: (a) the Largest Subset Containment Search (LSCS) problem, and (b) its generalization, the Generalized Subset Containment Search (GSCS) problem. 

\noindent\textit{Notation}: Until now, we used $p_i$, $q_i$ and $\transaction$ to denote item sets (and $|\cdot|$ denotes their size). With a slight abuse of notation, we will also denote their set characteristic vectors with the same symbols. A set characteristic vector has its coordinates equal to $1$ if the items are present in the set, and $0$ otherwise. Thus, in this case, $p_i$, $q_i$ and $\transaction$ are also binary valued vectors in the $|\mathcal{I}|$-dimensional space. Also, we denote the $j^{\text{th}}$ coordinate of $p_i$ as $p_i^j$, where $1 \leq j \leq |\mathcal{I}|$, and the $\ell_1$-norm of $p_i$ as $\|p_i\|_1$. The meaning of the symbols will be hopefully clear from the context, and will also be reiterated as needed.

\noindent\textit{Largest Subset Containment Search}: The problem $\max_{1 \leq i \leq |D|} \allowbreak \sum_{j=1}^{|\itemset|}\transaction^j \cdot p_i^j$ subject to $p_i^j \leq \transaction^j\;\;\textrm{ for } 1\leq j\leq |\itemset|$, attempts to find a set (i.e., a set characteristic vector) whose inner product with the set characteristic vector $\transaction$ is the highest among all sets that are subsets of $\transaction$. It is related to the \textsc{TOP-1-Assoc}($f$) criterion for certain ordering functions as shown below.

\begin{lemma} When $f(i) \propto |p_i|$ for all $1 \leq i \leq |D|$, the \textsc{TOP-1-Assoc}($f$) criterion is equivalent to the LSCS problem.
\label{lemma:norm-inner-prod-eqiv}
\end{lemma}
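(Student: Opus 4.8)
The plan is to show that both optimization problems reduce to the same combinatorial selection and hence share the same optimizer. First I would substitute the parameter settings that define \textsc{TOP-1-Assoc}($f$), namely $k=1$, $w=0$, and $t=|\itemset|$, into the general \textsc{TOP}-\textsc{Assoc} formulation. Because every antecedent satisfies $|p_i|\le|\itemset|$ and every weight satisfies $w_i\ge 0$, the indicators $\mathbf{1}[|p_i|\le t]$ and $\mathbf{1}[w_i\ge w]$ are identically $1$, so the feasibility constraint collapses to $x_i\le\mathbf{1}[p_i\subseteq\transaction]$ together with $\sum_i x_i\le 1$. This already reveals that the criterion selects at most one \emph{applicable} rule (one with $p_i\subseteq\transaction$) maximizing $f(i)$. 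Invoking the hypothesis $f(i)\propto|p_i|$, i.e.\ $f(i)=c\,|p_i|$ with $c>0$, and noting the proportionality constant is positive, maximizing $\sum_i x_i f(i)$ under the single-selection constraint is equivalent to picking the applicable rule of largest antecedent cardinality $|p_i|=\|p_i\|_1$.

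Next I would translate the LSCS constraints into set language. The coordinatewise inequalities $p_i^j\le\transaction^j$ for all $j$ say exactly that the support of $p_i$ is contained in that of $\transaction$, that is $p_i\subseteq\transaction$; hence the LSCS feasible region coincides with the set of applicable rules identified above. The key step is to evaluate the LSCS objective on this region: whenever $p_i\subseteq\transaction$, every coordinate with $p_i^j=1$ also has $\transaction^j=1$, so $\transaction^j p_i^j=p_i^j$ and the inner product $\sum_j\transaction^j p_i^j$ collapses to $\sum_j p_i^j=|p_i|$. Thus the LSCS problem also selects the applicable rule of largest antecedent size.

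Finally I would conclude that both problems maximize the same objective (the antecedent cardinality $|p_i|$) over the same feasible set (the applicable rules), and therefore return the same optimal index, which establishes the claimed equivalence. The only point needing care is the degenerate case in which no rule is applicable: there \textsc{TOP-1-Assoc} returns the all-zero vector $x$ (recommending nothing) while the LSCS feasible region is empty, and I would simply note that both conventions amount to "no recommendation," completing the correspondence. Since the argument is essentially a definitional unpacking, I expect no substantive obstacle beyond carefully matching the two feasible sets and verifying the constant-factor reduction of the objective under the subset constraint.
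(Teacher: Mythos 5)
Your proposal is correct and follows essentially the same route as the paper: the paper's justification is precisely the observation that when $p_i \subseteq \mathcal{T}$ the LSCS inner product $\sum_j \mathcal{T}^j p_i^j$ collapses to $|p_i|$, which matches the objective of \textsc{TOP-1-Assoc} under $f(i)\propto|p_i|$ with the feasibility constraints reduced to applicability. Your additional care with the degenerate empty-feasible-set case and the explicit unpacking of $k=1$, $w=0$, $t=|\mathcal{I}|$ is just a more detailed rendering of the same argument.
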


The above lemma holds because if $p_i \subseteq \transaction$, then $|p_i|$ is equal to $\sum_{j=1}^{|\itemset|}p_i^j\transaction^j$, which is the objective of LSCS. This connection to LSCS allows us to design a sub-linear time (i.e., $o(|D|)$) randomized approximate algorithm for fetching ARs (see Section~\ref{sec:approxalgo}) under the \textsc{TOP-1-Assoc} criterion (with appropriately chosen $f$) if we can come up with such algorithms for the LSCS problem. And the way we achieve this is by building on fast unconstrained inner-product search techniques~\cite{neyshabur2015symmetric} that solve the \text{Maximum Inner Product Search (MIPS)} problem, which is the following problem. Given a collection of ``database'' vectors $r_i \in \mathbb{R}^{d}, 1 \leq  i \leq |D|$ and a query vector $s \in \mathbb{R}^{d}$ ($d$ is the dimension), find a data vector maximizing the inner product with the query: $r^* \in \arg\max_{1 \leq i \leq |D|} \sum_{j=1}^{d}r_i^js^j$. 

MIPS and LSCS are not equivalent to each other. To see this, consider the following MIPS instance constructed to mimic an LSCS instance. Let $d = |\itemset|$. Let $r_i$ be equal to normalized antecedent vectors: $r_i = \frac{1}{\|p_i\|_1}p_i $ with $p_i \in \{0,1\}^{|\itemset|}$ for $ 1 \leq i \leq |D|$, and let query $s$ be equal to the set characteristic vector $\transaction \in \{0,1\}^{|\itemset|}$. The normalization in the definition of $r_i$s ensures that smaller length antecedents are preferred in the MIPS instance in order to mimic the subset containment or `applicable' property. Let $p_{\textrm{LSCS}}$ and $p_{\textrm{MIPS}}$ be the optimal solutions of the LSCS and MIPS instances constructed above. Then the following statements hold.

\begin{lemma} (1) [LSCS feasible implies MIPS optimal] If the LSCS instance is feasible (i.e., there exists at least one $p$ such that $p^j = 0 $ for all $j$ where $\transaction^j = 0$), then all feasible solutions of LSCS instance are optimal for the MIPS instance. (2) [MIPS optimal does not imply LSCS optimal] If there exist $p_1,p_2$ such that $\|p_1\|_1 < \|p_2\|_1$ such that $p_1^j = 0 $ and $p_2^j=0$ for all $j$ where $\transaction^j = 0$, then $p_1$ (as well as $p_2$) is optimal for the MIPS instance but is not optimal for the LSCS instance.
\label{ref:mips-lscs-relation1}
\end{lemma}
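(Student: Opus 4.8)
The plan is to pin down the exact value of the MIPS objective attained by LSCS-feasible vectors, show it coincides with the global MIPS optimum, and then read off both parts as immediate consequences.

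First I would record the elementary identity governing the MIPS objective on this constructed instance. For any index $i$, the objective is $\sum_{j=1}^{|\itemset|} r_i^j s^j = \frac{1}{\|p_i\|_1}\sum_{j=1}^{|\itemset|} p_i^j \transaction^j$. Since both $p_i$ and $\transaction$ have entries in $\{0,1\}$, we have $p_i^j \transaction^j \leq p_i^j$ coordinatewise, and hence $\sum_j p_i^j \transaction^j \leq \sum_j p_i^j = \|p_i\|_1$. Equality holds precisely when $\sum_j p_i^j(1 - \transaction^j) = 0$, i.e. precisely when $p_i^j = 0$ at every coordinate where $\transaction^j = 0$, which is exactly the feasibility condition $p_i^j \leq \transaction^j$ for LSCS. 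Dividing through by $\|p_i\|_1$ therefore shows the MIPS objective is at most $1$ for every $i$, and equals $1$ on exactly the LSCS-feasible vectors.

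For part (1), the feasibility hypothesis supplies at least one $p$ attaining MIPS value $1$; combined with the universal upper bound of $1$ derived above, this forces the MIPS optimum to equal $1$. Every feasible LSCS solution attains this value, so every feasible LSCS solution is MIPS-optimal. For part (2), both $p_1$ and $p_2$ satisfy the vanishing condition at coordinates where $\transaction^j = 0$, so both are LSCS-feasible and thus attain MIPS value $1$; by the computation above they are both MIPS-optimal. On the other hand, the LSCS objective restricted to a feasible vector $p$ reduces to $\sum_j p^j \transaction^j = \|p\|_1$, so LSCS strictly rewards larger $\ell_1$-norm among feasible vectors. Since $\|p_1\|_1 < \|p_2\|_1$ with $p_2$ feasible, the vector $p_1$ cannot be LSCS-optimal, which establishes the desired separation.

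I expect no serious obstacle in this argument; the single point demanding care is verifying that MIPS value $1$ is characterized \emph{exactly} by LSCS-feasibility, so that the optimum is both attained and attained only on feasible vectors. This is precisely where the $\ell_1$-normalization $r_i = \frac{1}{\|p_i\|_1}p_i$ is essential: without it, larger antecedents would inflate the raw inner product and the clean ``value $1$ iff feasible'' dichotomy — and with it the whole comparison between the two optimality notions — would break down.
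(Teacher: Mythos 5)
Your proof is correct and follows the same line the paper intends (the paper itself only sketches the argument after the lemma statement): the $\ell_1$-normalization forces the MIPS objective to equal $1$ exactly on the LSCS-feasible vectors and to be strictly below $1$ otherwise, from which both parts follow. Your explicit verification of the "value $1$ iff feasible" dichotomy, and the observation that the LSCS objective on feasible vectors reduces to $\|p\|_1$, fills in precisely the details the paper leaves implicit.
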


In other words, part (1) implies that if the LSCS instance is feasible, then there exists an optimal solution $p_{\textrm{MIPS}}$ for the constructed MIPS instance that has $p_{\textrm{MIPS}}^j = 0$ for all coordinates where $\transaction^j=0$. Part (2) implies that the optimal solutions of the MIPS instance are potentially feasible for the LSCS instance if they satisfy a condition. However, there is no guarantee that they will be optimal for the LSCS problem. In the worst case, there could be as many as O($2^{|\transaction|}$) optimal solutions for the MIPS instance but only a unique solution for the LSCS instance. 

\noindent\textit{Generalized Subset Containment Search}: LSCS can be generalized to get the GSCS problem, which is as follows: $\max_{1 \leq i \leq |D|} f'(i)\cdot \allowbreak \sum_{j=1}^{|\itemset|}\transaction^j \cdot p_i^j$ subject to $p_i^j \leq \transaction^j\;\;$ for $ 1\leq j\leq |\itemset|$, where $f'(i)$ is an ordering function. When $f'(i) \allowbreak = 1$, this is LSCS. In other words, for LSCS the ordering determining the top is not dependent on attributes such as the weight $w_i$ and is only dependent on the antecedent length (through the inner product term in the objective). On the other hand, GSCS can account for arbitrary ordering functions, especially when such functions capture task specific meaning (such as being related to recommendation accuracy for instance). Since the GSCS problem is more general, it is clear that the GSCS and the MIPS problems are also different.

\subsection*{Collate Step: Item Recommendations}
Once we have generated a set $\mathcal{L}$ of applicable association rules according to one of the criteria described above (assumed non-empty, otherwise we return a predefined list such as the list of globally most frequent items), we can compile a list of item recommendations in the following two ways:

\noindent\textit{Uncapacitated setting}: We simply return the union of consequent(s) $q_i$ of the association rules in $\mathcal{L}$; however, this list can be  potentially large.

\noindent\textit{Capacitated setting}: The client may have a constraint $k' << |\itemset|$ on the number of items it can recommend to the user. In this case, we derive associated weights $\tilde{w}_j$ for each item $j \in \cup_{i \in \mathcal{L}}\;q_i$ by adding up the weights $w_i$ of the rules where item $j$ is in the consequent $q_i$. The list of recommended items are then sorted according to these accumulated weights. If the bound $k' < |\cup_{i \in \mathcal{L}}q_i|$, we return the top $k'$ items from this sorted list, else we return all the items.

\section{Approximate Algorithms}
\label{sec:approxalgo}

We give approximate randomized algorithms that allow for sublinear time fetching of applicable association rules under two different criteria: \textsc{TOP-1-Assoc} and \textsc{TOP-K-Assoc}. We first introduce the basics of locality sensitive hashing that will be key to the design of our algorithms. Next, we propose our first algorithm for the \textsc{TOP-1-Assoc}($f$) criterion. And finally, we show how selecting rules for the \textsc{TOP-K-Assoc} can be reduced to finding solutions for multiple instance of the former allowing us to reuse the data structure construction and query processing steps.

\subsection*{Hashing based Data Structures for Rule Retrieval}

LSH~\cite{andoni2008near,har2012approximate} is a technique for finding vectors from a known set of vectors that are  most `similar'  to a given query vector in an efficient manner (using randomization and approximation). We will use it to solve MIPS and GSCS problems associated with the criteria mentioned above. The method uses hash functions that have  the following locality-sensitive property.

\begin{definition}\label{def:hashfunctions}
A $(r,cr,P_1,P_2)$-sensitive family of hash functions ($h \in \mathcal{H}$) for a metric space $(X,d)$ satisfies the following properties for any two points $p,q \in X$:
\begin{itemize}
\item If $d(p,q) \leq r$, then $Pr_{\mathcal{H}}[h(q) = h(p)] \geq P_1$, and
\item If $d(p,q) \geq cr$, then $Pr_{\mathcal{H}}[h(q) = h(p)] \leq P_2$.
\end{itemize}
\end{definition}

LSH solves the nearest neighbor problem via solving a near-neighbor problem defined below.

\begin{definition} The $(c,r)$-NN (approximate \emph{near} neighbor) problem with failure probability $f \in (0,1)$ is to construct a data structure over a set of points $P$ that supports the following query:  given point $q$, if $\min_{p \in P}d(q,p) \leq r$, then report some point $p' \in P\cap \{p: d(p,q) \leq cr\}$ with probability $1-f$. Here, $d(q,p)$ represents the distance between points $q$ and $p$ according to a metric that captures the notion of neighbors. Similarly, the $c$-NN (approximate \emph{nearest} neighbor) problem with failure probability $f \in (0,1)$ is to construct a data structure over a set of points $P$ that supports the following query: given point $q$, report a $c$-approximate nearest neighbor of $q$ in $P$ (i.e., return $p'$ such that $d(p',q) \leq c\min_{p \in P}d(p,q)$) with probability $1-f$.
\end{definition}

The following theorem states that we can construct a data structure that solves the approximate near neighbor problem in sub-linear time.

\begin{theorem} [\cite{har2012approximate} Theorem 3.4] Given a $(r,cr,P_1,P_2)$-sensitive family of hash functions, there exists a data structure for the $(c,r)$-NN (approximate near neighbor problem) over points in the set $P$ (with $|P| = N$) such that the time complexity of returning a result is $O(nN^\rho/P_1 \log_{1/P_2}N)$ and the space complexity is $O(nN^{1+\rho}/P_1)$. Here $\rho = \frac{\log 1/P_1}{\log 1/P_2}$. Further, the failure probability is upper bounded by $1/3 + 1/e$.\label{thm:near-neighbor}
\end{theorem}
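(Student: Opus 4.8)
The plan is to follow the classical Indyk--Motwani hash-table construction, separately controlling two failure events and then tuning the two free parameters --- the number $k$ of hash functions concatenated per table and the number $L$ of independent tables --- to obtain the stated query-time, space, and failure bounds.

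First I would amplify the gap between $P_1$ and $P_2$. For each $\ell \in \{1,\dots,L\}$ I form a compound hash $g_\ell(\cdot) = (h_{\ell,1}(\cdot),\dots,h_{\ell,k}(\cdot))$ with the $h_{\ell,j}$ drawn independently from $\mathcal{H}$, and build a hash table keyed on the values of $g_\ell$. By independence of the coordinates, a pair at distance $\geq cr$ collides under a fixed $g_\ell$ with probability at most $P_2^k$, while a pair at distance $\leq r$ collides with probability at least $P_1^k$. Choosing $k = \lceil \log_{1/P_2} N\rceil$ makes $P_2^k \leq 1/N$, so the expected number of ``far'' points colliding with the query in one table is at most $N\cdot P_2^k \leq 1$. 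Using $\rho = \frac{\log 1/P_1}{\log 1/P_2}$ together with the ceiling bound $k \leq \log_{1/P_2}N + 1$, a short computation yields $P_1^k \geq P_1\cdot N^{-\rho}$; this is precisely where the $1/P_1$ factor in the complexities will originate.

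Next I would bound the two failure events and combine them by a union bound. For the \emph{miss} event, suppose a point $p$ with $d(q,p)\le r$ exists; the probability that $p$ fails to collide with $q$ in all $L$ independent tables is at most $(1-P_1^k)^L \leq (1 - P_1 N^{-\rho})^L$, and setting $L = N^{\rho}/P_1$ drives this below $1/e$. For the \emph{clutter} event, I would adopt the standard interrupt rule of halting the scan after inspecting $3L$ candidate points: since the expected total number of far collisions over all tables is at most $L$, Markov's inequality bounds the probability of exceeding $3L$ by $1/3$. When neither event occurs, the algorithm inspects $p$ (or some other point within distance $cr$) before halting, so the overall failure probability is at most $1/3 + 1/e$, as claimed.

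Finally, the complexities follow by accounting. A query evaluates $L$ compound hashes, each requiring $k$ evaluations of $h$ on an $n$-dimensional point at $O(n)$ cost, and then inspects $O(L)$ candidates at $O(n)$ cost each, giving query time $O(nkL) = O\!\left(nN^{\rho}\log_{1/P_2}N / P_1\right)$; storing $L$ tables over $N$ points of dimension $n$ gives space $O(nNL) = O(nN^{1+\rho}/P_1)$. I expect the main obstacle to be the joint tuning of $k$ and $L$: the ceiling in the choice of $k$ must be tracked carefully to produce the $P_1^k \geq P_1 N^{-\rho}$ bound (and hence the $1/P_1$ factors), and the interrupt rule at $3L$ candidates is essential --- without it a single adversarially clustered table could force a query to touch all $N$ points, destroying the sublinear guarantee.
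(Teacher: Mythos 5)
Your proposal is correct and follows essentially the same construction the paper relies on: the theorem is quoted from the cited reference, and the paper's own informal description immediately after it (concatenation length $L_1=\log N$, $L_2=N^{\rho}$ tables, and the interrupt rule after $L_3=3L_2$ retrieved points) is exactly the parameter setting you derive, with the miss/clutter split giving the $1/e+1/3$ failure bound. Your tracking of the ceiling in $k$ to extract the $1/P_1$ factors is the right way to recover the stated constants.
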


The failure probability above can be changed to any desired value through \emph{amplification}. The data structure is as follows~\cite{andoni2008near}: we employ multiple hash functions to increase the confidence in reporting near neighbors by amplifying the gap between $P_1$ and $P_2$. The number of such hash functions is determined by parameters $L_1$ and $L_2$. We choose $L_2$ functions of dimension $L_1$, denoted as $g_j(q) = ( h_{1,j}(q) , h_{2,j}(q), \cdots h_{L_1, j}(q) )$, where $h_{t,j}$ with $1 \leq t \leq L_1 ,1 \leq j \leq L_2$ are chosen independently and uniformly at random from the family of hash functions. The data structure for searching points with high similarity is constructed by taking each point $x$ (in our setting, these would be the set characteristic vectors of antecedents) and storing it in the location (bucket) indexed by  $g_j(x) , 1 \leq j \leq L_2$.  When a new query point $q$ is received,  $g_j(q) , 1 \leq j \leq L_2$ are  calculated and all the points from the search space in the buckets  $g_j(q) , 1 \leq j \leq L_2$ are retrieved.  We then compute the similarity of these points with  the query vector in a sequential manner and return any point that has a similarity greater than the specified threshold $r$. We also interrupt the search after finding the first $L_3$ points including duplicates (this is necessary for the guarantees in Theorem~\ref{thm:near-neighbor} to hold). Choosing $L_1 = \log N$, $L_2 = N^\rho$ and $L_3 = 3L_2$ allows for sublinear query time. The storage space for the data structure is $O(nN^{1+\rho})$, which is not too expensive (we need $O(nN)$ space just to store the points).

The following theorem states that an approximate nearest neighbor data structure can be constructed using an approximate near neighbor data structure.

\begin{theorem} [\cite{har2012approximate} Theorem 2.9] Let P be a given set of N points in a metric space, and let $c, f \in (0,1)$ and $\gamma \in (\frac{1}{N},1)$ be parameters. Assume we have a data-structure for the $(c,r)$-NN (approximate \emph{near} neighbor) problem that uses space $S$ and has query time $Q$ and failure probability $f$. Then there exists a data structure for answering $c(1+O(\gamma))$-NN (approximate \emph{nearest} neighbor) problem queries in time $O(Q\log N)$ with failure probability $O(f\log N)$. The resulting data structure uses $O(S/\gamma \log^2 N)$ space.\label{thm:nearest-neighbor}
\end{theorem}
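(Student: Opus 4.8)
The plan is to reduce the \emph{nearest}-neighbor query to a small number of \emph{near}-neighbor queries, each answered by an instance of the data structure guaranteed in Theorem~\ref{thm:near-neighbor}. First I would fix a geometric sequence of radii $r_j = r_{\min}(1+\gamma)^j$ for $j = 0,1,\ldots,m$, where $r_{\min}$ and $r_{\max}$ are (estimates of) the smallest and largest interpoint distances and $m$ is chosen so that $r_m \geq r_{\max}$. For each $r_j$ I would build a separate $(c,r_j)$-NN data structure on $P$ using the given construction, each costing space $S$ and query time $Q$. The intended semantics is that the nearest-neighbor distance $d^\ast = \min_{p\in P} d(q,p)$ is pinned down, up to a factor of $(1+\gamma)$, by the \emph{smallest} radius $r_j$ at which some point is reported.

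Next I would argue the approximation guarantee. Let $j^\ast$ be the smallest index for which the $(c,r_{j^\ast})$-NN structure reports a point $p'$. By the first bullet of the $(c,r)$-NN definition, whenever $r_j \geq d^\ast$ a point within distance $c\,r_j$ is reported with probability $1-f$; hence, ignoring failures for the moment, $r_{j^\ast-1} < d^\ast \leq r_{j^\ast}$, so $d(q,p') \leq c\,r_{j^\ast} = c(1+\gamma)\,r_{j^\ast-1} \leq c(1+\gamma)\,d^\ast$, which yields the claimed $c(1+O(\gamma))$ bound. The (essentially) monotone predicate ``some point is reported at radius $r_j$''---a larger radius can only make reporting more likely, with a fuzzy transition window of $O(\gamma^{-1}\log c)$ scales that any choice inside still approximates correctly---is what makes $j^\ast$ well-defined and searchable.

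For efficiency I would locate $j^\ast$ by \emph{binary search} over $j$ rather than scanning all radii. Binary search touches only $O(\log m)$ of the near-neighbor structures, so the query time is $O(Q\log m)$, and a union bound over exactly these $O(\log m)$ queried structures gives overall failure probability $O(f\log m)$. To match the stated bounds I would arrange that $\log m = O(\log N)$, giving query time $O(Q\log N)$ and failure probability $O(f\log N)$.

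The main obstacle is controlling the number of radii $m$: taken literally, $m = O(\gamma^{-1}\log(r_{\max}/r_{\min}))$ scales with the \emph{spread} $\Phi = r_{\max}/r_{\min}$, which can be arbitrarily large and would blow up the space well past the target $O(S\,\gamma^{-1}\log^2 N)$. The crux of the proof is therefore a \emph{spread-reduction} step that makes the effective number of scales polylogarithmic in $N$. I would preprocess $P$ with a hierarchical decomposition (e.g., a ring-cover/ring-separator tree or a randomly shifted grid hierarchy) having $O(\log N)$ levels, where each level stores near-neighbor structures for only an $O(\gamma^{-1}\log N)$-wide window of consecutive scales; this yields $O(\gamma^{-1}\log^2 N)$ structures in total, and a query descends the $O(\log N)$ levels issuing $O(\log N)$ near-neighbor queries in aggregate, recovering all three claimed bounds. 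Verifying that the per-query scale localization is correct with high probability, that it composes cleanly with the binary-search argument, and that the bookkeeping does not inflate the query time beyond $O(Q\log N)$ or the failure probability beyond $O(f\log N)$, is the delicate part and the step I would spend the most effort on.
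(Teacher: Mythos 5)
First, a point of reference: the paper does not prove Theorem~\ref{thm:nearest-neighbor}. It is imported verbatim from \cite{har2012approximate} (their Theorem~2.9) and used as a black box; in fact, immediately after quoting it the paper deliberately abandons this data structure in favor of a simpler, admittedly sub-optimal linear scan over increasing radii, because the full construction is awkward to privatize. So there is no in-paper proof to compare against, and your attempt must be judged on its own. Your first two paragraphs are a correct account of the standard outer shell of the reduction: build $(c,r_j)$-NN structures on a geometric ladder $r_j = r_{\min}(1+\gamma)^j$, search for the first radius at which a point is reported, lose a factor $(1+\gamma)$, and union-bound the failure probability over the $O(\log m)$ structures actually touched (legitimate despite the adaptivity, since conditioned on no failures the search path is determined by $q$). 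One small slip: you cannot conclude $d^\ast \leq r_{j^\ast}$, because the structure at radius $r_{j^\ast}$ is permitted to report a point whenever something lies within $c\,r_{j^\ast}$, not only when $d^\ast \leq r_{j^\ast}$; only the lower bound $r_{j^\ast-1} < d^\ast$ is guaranteed, and fortunately that is all the approximation argument needs.

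The genuine gap is the spread reduction, which you correctly identify as the crux but then discharge by name-dropping rather than by argument. Neither gadget you cite delivers the stated bounds at the stated generality: a randomly shifted grid hierarchy presupposes a Euclidean or normed ambient space, whereas the theorem is for an arbitrary metric, and the Indyk--Motwani ring-cover tree yields a reduction with space overhead polynomial in $N$ --- removing exactly that overhead is the point of Theorem~2.9 of \cite{har2012approximate}. Moreover, a hierarchical clustering of a general metric need not have $O(\log N)$ levels (points at distances $2, 4, 8, \dots$ from a common origin produce a single-linkage tree of depth $N-1$), so ``each level stores an $O(\gamma^{-1}\log N)$-window of scales'' does not by itself bound the total number of structures by $O(\gamma^{-1}\log^2 N)$. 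The actual argument in \cite{har2012approximate} builds the decomposition from the minimum-spanning-tree edge lengths of $P$, instantiating near-neighbor structures on nested clusters whose sizes telescope, and it kills the dependence on the spread with the observation that once the query distance exceeds a cluster's diameter by a factor of roughly $\gamma^{-1}$, every point of that cluster is a $(1+\gamma)$-approximate answer --- so large multiplicative gaps in the distance profile require no additional scales. Without carrying out that construction and verifying the localization and bookkeeping you defer, what you have is a correct outline of the easy half plus an accurate diagnosis of where the difficulty lives, not a proof.
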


Instead of using the data structure above, below we will use a slightly sub-optimal data structure (see \textsc{Approx-GSCS-Prep}) but amenable to private protocols (Section~\ref{sec:private-protocols}) as follows: We create multiple near neighbor data structures as described in Theorem~\ref{thm:near-neighbor} using different threshold values ($r$) but with the same success probability $1-f$ (by amplification for instance). When a query vector is received, we calculate the near-neighbors using the hash structure with the lowest threshold. We continue checking with increasing value of thresholds till we find at least one near neighbor. Let $\widetilde{r}$ be the first threshold for which there is at least one near neighbor. This implies that the probability that we don't find the true nearest neighbor is at most $f$ because the near neighbor data structure with the threshold $\widetilde{r}$ has success probability $1-f$. If the different radii in the data structures are not too far apart so that not too many points are retrieved, we can still get sublinear query times.

\subsection*{Solving for \textsc{TOP-1-Assoc}($f$)}

For the \textsc{TOP-1-Assoc} criterion, our goal is to return a single rule whose antecedent set is contained within the query set $\transaction$ (applicable) and whose $f(i)$ is the highest among such applicable rules. Our algorithm solves the GSCS formulation of this criterion by constructing a corresponding approximating MIPS instance, and solving it using locality sensitive hashing (LSH) based techniques\cite{neyshabur2015symmetric}. 

Our scheme has two parts: (a) Preprocessing state involving \textsc{Approx-GSCS-Prep} (Algorithm \ref{algo:Approx-GSCS-Prep}), and (b) Query stage involving \textsc{Approx-GSCS-Query} (Algorithm \ref{algo:Approx-GSCS-Query}). In \textsc{Approx-GSCS-Prep}, the algorithm prepares a data structure based on all rules that can be efficiently searched at query time to obtain applicable association rules. In \textsc{Approx-GSC-Query}, the algorithm takes a transaction $\transaction$ and outputs the rule that satisfies the \textsc{TOP-1-Assoc}($f$) criteria via a linear-scan (worse case $O(2^{|\transaction|})$).\\  %decoupling applicability with the criteria. no need for functions such as C + alpha Log |p| - |P|

\myparagraph{Pre-processing Stage} The GSCS instance for obtaining applicable rules is: $\max_{1 \leq i \leq |D|} f(i)\cdot\sum_{j=1}^{|\itemset|}p_i^j\transaction^j$ such that $p_i^j \leq \transaction^j$ for all $1 \leq i \leq |D|, 1 \leq j \leq |\itemset|$. Since a GSCS instance cannot be transformed exactly to an MIPS instance as detailed in Section~\ref{sec:criteria}, we create an approximating MIPS that results in a set of candidate rules that contain the rule which is optimal for the \textsc{TOP-1-Assoc} criterion. From these rules, we prune for the one that satisfies the \textsc{TOP-1-Assoc}($f$) criterion by evaluating the ordering function at query time (see below). 

The approximate MIPS instance we construct is as follows: $\max_{1 \leq i \leq |D|} \frac{1}{\|p_i\|_{1}}\cdot\sum_{j=1}^{|\itemset|}p_i^j\transaction^j$, where we have replaced the hard constraints related to subset containment with a proxy scaling coefficient. The effect of the coefficient is that it prefers applicable antecedents with small antecedent length, which ensures that the chosen rule obeys the original containment constraint (see Lemma~\ref{ref:mips-lscs-relation1}). The objective of the MIPS instance can be viewed as an inner product between two real vectors, where the first vector is $\frac{1}{\|p_i\|_{1}}p_i$. Let vector $p_i' = \frac{1}{\|p_i\|_1}p_i \in \mathbb{R}^{|\itemset|}$, which satisfies $\|p_i'\|_2 \leq 1$. This re-parameterization achieves two things: (a) using $p_i'$ in the MIPS instance has the same effect as using $\frac{1}{\|p_i\|_{1}}p_i$, and (b) its $\ell_2$ norm is smaller than $1$, allowing us to apply the technique proposed in \cite{neyshabur2015symmetric} to build our data structure using
\textsc{Approx-GSCS-Prep}. This algorithm (shown in Algorithm \ref{algo:Approx-GSCS-Prep}) uses \textsc{Simple-LSH-Prep} (see Algorithm~\ref{algo:Simple-LSH-Prep}, proposed in~\cite{neyshabur2015symmetric}) to create the desired data structure that allows for sublinear time querying of applicable association rules. 

Our scheme also works on the LSH principle described above. The subroutine that we use, namely \textsc{Simple-LSH-Prep} (Algorithm~\ref{algo:Simple-LSH-Prep}), relies on the inner product similarity measure and comes with corresponding guarantees on the retrieval quality. To construct a data structure for fast retrieval, it uses hash functions parameterized by spherical Gaussian vectors $a \sim \mathcal{N}(0,I)$ such that $h_a(x) = \textrm{sign}(a^Tx)$ ($\textrm{sign}()$ is a scalar function that outputs $+1$ if its argument is positive and $0$ otherwise). 

Given the scaled $p_i'$ vectors, \textsc{Simple-LSH-Prep} constructs a data structure $\mathcal{DS}$ as follows. It defines a mapping $P$ for vector $x \in \{x \in \mathbb{R}^{|\itemset|}: \|x\|_2 \leq 1 \}$ as: $P(x) = \left[x;\sqrt{1 - \|x\|_2^2}\right] \allowbreak \in \mathbb{R}^{|\mathcal{I}|+1}$. Then, for any $p_i'$, due to our scaling, we have $\|P(p_i')\|_2 \allowbreak = 1$. Let $\transaction' = \frac{1}{\|\transaction\|_2}\transaction$ be the scaled version of the transaction $\transaction$. Then, for any scaled vector $p_i'$ and $\transaction'$ we have the following property: 
$P(p_i')^TP(\transaction') 
= \frac{1}{\|p_i\|_1}\sum_{j=1}^{|\itemset|}p_i^j\transaction'^j,$ 
where $()^{T}$ represents the transpose operation. This implies that the inner product in the space defined by the mapping $P$ is equal to our MIPS instance objective. Further, in this new space, maximizing inner product is the same as minimizing Euclidean distance. Thus, using the hash functions $\{h_a\}$ defined earlier to perform fast Euclidean nearest neighbor search achieves doing an inner product search in the original space defined by the domain of $P()$. 

Let the approximation guarantee for the nearest neighbor obtained be $1+\nu$ (e.g., set the parameters of the data structure in Theorem~\ref{thm:nearest-neighbor} such that it solves the $(1+ \nu)$-NN problem). Then the following straightforward relation gives the approximation guarantee for the MIPS problem.
\begin{lemma}\label{lemma:nn2mips}
If we have an $1+\nu$ solution $x$ to the nearest neighbor problem for vector $y$, then $1 + (1+\nu)^2(\max_{p \in P}p\cdot y - 1) \leq x \cdot y \leq \max_{p \in P}p\cdot y$.
\end{lemma}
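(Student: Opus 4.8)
The plan is to exploit the fact that the \textsc{Simple-LSH} embedding places every point on the unit sphere, which turns the approximate nearest-neighbor guarantee into the claimed inner-product bounds through a single distance-to-inner-product identity. Recall that the mapping $P(\cdot)$ is built so that $\|P(p_i')\|_2 = 1$ for every database antecedent, and the query satisfies $\|P(\transaction')\|_2 = 1$ as well (since $\transaction'$ is normalized, the padding coordinate $\sqrt{1-\|\transaction'\|_2^2}$ vanishes). Hence in the statement of the lemma every $p \in P$, the query $y$, and the returned point $x$ are all unit vectors. For any two unit vectors $a,b$ we have $\|a-b\|_2^2 = \|a\|_2^2 + \|b\|_2^2 - 2\,a\cdot b = 2 - 2\,a\cdot b$, so minimizing Euclidean distance to $y$ is exactly maximizing inner product with $y$; in particular the exact Euclidean nearest neighbor of $y$ in $P$ coincides with the MIPS maximizer.

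Next I would introduce $M := \max_{p \in P} p\cdot y$ and $d^* := \min_{p\in P}\|p-y\|_2$, and apply the identity above at the optimal point to record $d^{*2} = 2(1-M)$, equivalently $d^{*2}/2 = 1 - M$. The $(1+\nu)$-approximate nearest-neighbor guarantee states that the returned point $x$ satisfies $\|x-y\|_2 \leq (1+\nu)\,d^*$. Squaring this and invoking the distance-to-inner-product identity once more at $x$ gives $2 - 2\,(x\cdot y) = \|x-y\|_2^2 \leq (1+\nu)^2 d^{*2}$, whence $x\cdot y \geq 1 - \tfrac{1}{2}(1+\nu)^2 d^{*2} = 1 - (1+\nu)^2(1-M) = 1 + (1+\nu)^2(M-1)$, which is precisely the claimed lower bound. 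The upper bound $x\cdot y \leq M$ is immediate, since $x$ is itself a member of $P$ and therefore cannot achieve an inner product with $y$ exceeding the maximum over $P$.

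The arithmetic is routine, so the only point needing care is the exact meaning of a ``$1+\nu$ solution to the nearest neighbor problem'': the factor $1+\nu$ must be multiplicative on the \emph{Euclidean distance} (the $(1+\nu)$-NN guarantee of Theorem~\ref{thm:nearest-neighbor}), not on the squared distance or on the inner product. Once this is pinned down, the squaring step introduces the $(1+\nu)^2$ factor exactly as it appears in the statement, and it is the unit-norm property supplied by the \textsc{Simple-LSH} mapping that makes every conversion between distance and inner product exact. I expect this identification of the approximation convention to be the only genuine subtlety; everything else is a direct substitution.
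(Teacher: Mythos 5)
Your proof is correct, and it is exactly the argument the paper intends: the paper states this lemma without proof, dismissing it as a ``straightforward relation,'' and your derivation supplies the missing details in the standard way. The two points you flag as needing care --- that the \textsc{Simple-LSH} mapping makes every database point and the (normalized) query unit vectors so that $\|a-b\|_2^2 = 2 - 2\,a\cdot b$, and that the $(1+\nu)$ factor is multiplicative on the Euclidean distance so that squaring produces $(1+\nu)^2$ --- are indeed the only substantive steps, and your handling of both is right.
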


\textsc{Simple-LSH-Prep} uses multiple parameters, including concatenation parameters $\{K_m\}_{m=1}^{M}$, and repetition parameters $\{L_m\}_{m=1}^{M}$ (these are determined by a sequence of increasing radii $\{r_m\}_{m=1}^{M}$ needed for the nearest-neighbor problem, see above). For every $m$, \textsc{Simple-LSH-Prep} picks a sequence of $K_m\cdot L_m$ hash functions from $\{h_a\}$ and gets a $K_m\cdot L_m$ dimensional signature for each vector $P(p_i') \in \mathbb{R}^{|\itemset|+1},\;\; 1 \leq i \leq |D|$. These signatures and the chosen hash functions (across all radii) are output as $\mathcal{DS}$.

%%%%%%%%%%%%%%%%%%%%%%%%%%%%%%%%%%%%%%%%%%%%%%%%%%%%%%%%%%%%%%%%%%%%%%%%%%%%%%%%%%%%
\begin{algorithm}[h]
\scriptsize
\DontPrintSemicolon
\KwIn{$D$ rules with antecedents $p_i \in \{0,1\}^{|\itemset|}$.}
\KwOut{Data Structure $\mathcal{DS}$ containing rule representations and hashing constants}
\Begin
{
	\textbf{Do parallel}\{\\
	\ForAll{$i=1,...,|D|$}
	{
		$p_i' \leftarrow \frac{1}{\|p_i\|_1}p_i$\\
	}
	\}\textbf{End parallel}\\
	$\mathcal{DS} \leftarrow$ \textsc{Simple-LSH-Prep}($\{p_i'\}_{i=1}^{|D|}$)\\
	}
	return $\mathcal{DS}$
\caption{\bf \textsc{Approx-GSCS-Prep}({\small $\{p_i\}_{i=1}^{|D|}$})}
\label{algo:Approx-GSCS-Prep}
\end{algorithm}
\setlength{\textfloatsep}{0pt}
%%%%%%%%%%%%%%%%%%%%%%%%%%%%%%%%%%%%%%%%%%%%%%%%%%%%%%%%%%%%%%%%%%%%%%%%%%%%%%%%%%%%

%%%%%%%%%%%%%%%%%%%%%%%%%%%%%%%%%%%%%%%%%%%%%%%%%%%%%%%%%%%%%%%%%%%%%%%%%%%%%%%%%%%%
\begin{algorithm}[h]
\scriptsize
\DontPrintSemicolon
\KwIn{Vectors $v_i \in \mathbb{R}^{d}$ for $1 \leq i \leq |D|$.}
\KwOut{Data structure $\mathcal{DS}$ capturing the hash functions and the hash signatures of input ARs}
\Begin
{
	\textbf{Preprocessing: Generate Hash functions}\\
	$\mathcal{G}  \leftarrow \phi$\\
	\textbf{Do parallel}\{\\
	\ForAll{$m=1,...,M$}
	{
		\ForAll{$l=1,...,L_m$}
		{
			\ForAll{$k=1,...,K_m$}
			{
				$a \in \mathbb{R}^{d+1} \sim \mathcal{N}(0,I)$\\
				$\mathcal{G}[m,l,k] \leftarrow a$
			}
		}
	}
	\}\textbf{End parallel}\\
	\textbf{Preprocessing: Hash data vectors}\\
	Define $P(x) = \left[x;\sqrt{1 -\|x\|_2^2}\right]$ for $x \in \{x \in \mathbb{R}^d: \|x\|_2 \leq 1\}$\\
	$\mathcal{H} \leftarrow \phi$\\
	\textbf{Do parallel}\{\\
	\ForAll{$1 \leq i \leq |D|$}
	{
		\ForAll{$m=1,...,M$}
		{
			\ForAll{$1 \leq l \leq L_m$}
			{
				$\textrm{index} \leftarrow \phi$ \\
				\ForAll{$1 \leq k \leq K_m$}
				{
					\eIf{$ \mathcal{G}[m,l,k]^TP(v_i) \geq 0 $}{
						$\textrm{index.append}(1)$
						}{
						$\textrm{index.append}(0)$
						}
				}
				$\mathcal{H}[m,l,\textrm{index}].add(i)$
			}
		}
	}
	\}\textbf{End parallel}\\

	$\mathcal{DS} \leftarrow (\mathcal{G},\mathcal{H})$\\

	return $\mathcal{DS}$

}
\caption{\bf \textsc{Simple}-\textsc{LSH}-\textsc{Prep}($\{v_i\}_{i=1}^{|D|}$)}
\label{algo:Simple-LSH-Prep}
\end{algorithm}
\setlength{\textfloatsep}{0pt}
%%%%%%%%%%%%%%%%%%%%%%%%%%%%%%%%%%%%%%%%%%%%%%%%%%%%%%%%%%%%%%%%%%%%%%%%%%%%%%%%%%%%

\myparagraph{Query Stage} Given a transaction $\transaction$, \textsc{Approx-GSCS-Query} queries \textsc{Simple-LSH-Query} (see Algorithm~\ref{algo:Simple-LSH-Query}) to obtain rules that are applicable. \textsc{Simple-LSH-Query} solves the approximating MIPS problem in sub-linear time by filtering out the most similar neighbors of transaction vector $\transaction$ from the set $\{p_i': 1 \leq i \leq |D|\}$. It does this by constructing the vector $[\transaction; 0] \in \mathbb{R}^{|\itemset|+1}$ and getting (for each $m=1,...,M$ sequentially) its $K_m\cdot L_m$ dimensional signature with the same hash functions as used before for $p_i'$ vectors. Then it collects rules that share the same signature as the transformed transaction vector (it stops this process at the first $m$ for which it is able to find a rule). In particular, it ensures that the signatures agree in at least one $K_m$-length chunk out of the $L_m$ chunks. Appropriate choices of $K_m$ and $L_m$ (which do not depend on the number of rules $|D|$) allows for retrieval of the top candidates with high approximation quality. \textsc{Approx-GSCS-Query} processes this candidate list to get the top rule in terms of the ordering function $f$. This is a linear search with worst case time complexity O($2^{|\transaction|}$). In case the candidate list is empty, it returns a predefined baseline rule.

Note that scaling $\transaction$ to $\transaction' = \frac{1}{\|\transaction\|_2}\transaction$ before passing it to transformation map $P()$ is not necessary at query time. This is because, Given a Gaussian vector $a \in \mathbb{R}^{d+1}$ and a transaction vector $\transaction \in \{0,1\}^{d}$,
$\textrm{sign}(a^TP(\frac{\transaction}{\|\transaction\|_2}))\allowbreak = \textrm{sign}(a^T[\transaction;0])$. The advantage of this change is that we do not have to work with a real-valued vector at query time, leading to an efficient oblivious transfer (OT) step in the privacy preserving counterpart protocol that embeds this method (see Section \ref{sec:private-protocols} for more details).

%%%%%%%%%%%%%%%%%%%%%%%%%%%%%%%%%%%%%%%%%%%%%%%%%%%%%%%%%%%%%%%%%%%%%%%%%%%%%%%%%%%%
\begin{algorithm}[h]
\scriptsize
\DontPrintSemicolon
\KwIn{Data structure $\mathcal{DS}$ from \textsc{Approx-GSCS-Prep}, query $\transaction \in \{0,1\}^{|\itemset|}$}
\KwOut{Top most rule according to ordering function $f$}
\Begin
{
	Set $\mathcal{S} \leftarrow \phi$ \\
	$\mathcal{S} \leftarrow$ \textsc{Simple-LSH-Query}($\transaction,\mathcal{DS}$)\\

	\If{$\mathcal{S} = \phi$}{
		return a pre-defined default rule
	}

	Set $\mathcal{S'} \leftarrow \phi$\\
	Set $fval \leftarrow 0$\\ 
	\ForAll{$i \in \{i: p_i' \in \mathcal{S}\}$}{
		\If{$f(i) \geq fval$}{
			$\mathcal{S'} \leftarrow \{p_i \rightarrow q_i\}$\\
			$fval \leftarrow f(i)$\\
		}
	}
	return $\mathcal{S'}$\\
}
\caption{\bf \textsc{Approx-GSCS-Query}($\transaction,\mathcal{DS},f$)}
\label{algo:Approx-GSCS-Query}
\end{algorithm}
\setlength{\textfloatsep}{0pt}
%%%%%%%%%%%%%%%%%%%%%%%%%%%%%%%%%%%%%%%%%%%%%%%%%%%%%%%%%%%%%%%%%%%%%%%%%%%%%%%%%%%%

%%%%%%%%%%%%%%%%%%%%%%%%%%%%%%%%%%%%%%%%%%%%%%%%%%%%%%%%%%%%%%%%%%%%%%%%%%%%%%%%%%%%
\begin{algorithm}[h]
\scriptsize
\DontPrintSemicolon
\KwIn{Query $u \in \mathbb{R}^{d}$, data structure $\mathcal{DS}$}
\KwOut{Vector(s) with high inner products with query $u$}
\Begin
{
	$\mathcal{G},\mathcal{H} \leftarrow \mathcal{DS}$\\
	\textbf{Query:}\\
	\ForAll{$m=1,...,M$}
	{
		Set $\mathcal{S} \leftarrow \phi$\\
		\ForAll{$1 \leq l \leq L_m$}
		{
			index$ \leftarrow \phi$\\
			\ForAll{$1 \leq k \leq K_m$}
			{
				\eIf{$ \mathcal{G}[m,l,k]^T[u;0] \geq 0 $}{
					index.append(1)
					}{
					index.append(0)
					}
			}

			$\mathcal{S}$.add($\mathcal{H}$[m,l,index])\\

			\If{$|\mathcal{S}| > 3L_m$}{
				break\\
			}
		}

		\If{$\mathcal{S} \neq \phi$}{
			% $dval \leftarrow MAX$\\
			% $ibest \leftarrow nil$\\
			% \ForAll{$1 \leq i \leq |\mathcal{S}|$}
			% {
			% 	\If{$\|p'_i - u\|_2 \leq dval$}{
			% 		$dval \leftarrow \|p'_i - u\|_2$\\
			% 		$ibest \leftarrow i$\\
			% 	}
			% }
			% return $ibest$
			return $\mathcal{S}$
		}
	}

	return $\mathcal{S}$
}
\caption{\bf \textsc{Simple}-\textsc{LSH}-\textsc{Query}($u$,$\mathcal{DS}$)}
\label{algo:Simple-LSH-Query}
\end{algorithm}
%%%%%%%%%%%%%%%%%%%%%%%%%%%%%%%%%%%%%%%%%%%%%%%%%%%%%%%%%%%%%%%%%%%%%%%%%%%%%%%%%%%%

\subsection*{Solving for \textsc{TOP-K-Assoc}($k,f$)}

The approximate algorithm above can be adapted to the \textsc{TOP-K-Assoc} criterion due to a reduction from the approximate $k$-nearest neighbor problem and the approximate $1$-nearest neighbor problem (the reduction and its analysis are due to Sariel Har-Peled, 2018). The reduction is as follows. Given database $D$ and the parameter $k$, we construct $N = k\log|D|$ copies of the database ($D_1,...,D_N$) where in each database, every rule is included with a constant probability $1/k$. Given these $N$ databases, we apply \textsc{Approx-GSCS-Prep} to each to generate $\mathcal{DS}_1,...,\mathcal{DS}_N$. When we want to run a query $\transaction$ to get the top-$k$ applicable association rules according to an ordering function $f$, we seek the most highly applicable rule from each of the data structures using \textsc{Approx-GSCS-Query}. Once we retrieve $N$ highly applicable rules, we then prune this list by linear scanning and sorting to obtain the top-$k$ rules. This reduction increases the query time roughly by a factor that is linear in $k$ and logarithmic in the number of rules $|D|$. 

An intuitive argument for the reduction (Anastasios Sidiropoulos, 2018) is the following: Let $X$ be the set of $k$-nearest neighbors (rules satisfying the given criterion) to the query $\transaction$. When sampling a subset of the rules, for any $x \in X$, with probability $\Theta(1/k)$, we include $x$ and exclude every other rule in $X$. The specified number of sampled copies of the database $D$ are just enough to recover the top-$k$ rules with high probability, even when there are approximations.
\section{Exact Algorithms}
\label{sec:exactalgo}

For exact retrieval of applicable association rules according to any of the criteria in Section \ref{sec:criteria}, we essentially perform a linear scan over all rules, filter them according to the appropriate thresholds and sort them according to the given ordering function. Our main contribution here is a two-level data structure to store the ARs that has two attractive properties: (a) It can \emph{efficiently} store the rules for fetching quickly (in terms of the overall communication complexity and computational complexity needed), and (b) the data structure is easy to \emph{privatize} for use in a privacy-preserving protocol (see Section \ref{sec:private-protocols}). The data structure (denoted as $\mathcal{H}$) is common to exact implementations of all the criteria specified in Section~\ref{sec:criteria}. We describe the data structure in a generic way for retrieval of strings. Adapting the notion of strings to rules (and their attributes) in our setting is straightforward.

\subsection*{Pre-processing Stage}  

Consider a database $D$ of strings, with each string of maximum length $M$ (i.e., each string is a sequence of symbols from some ground set $\Sigma$).  We generate a data structure $\mathcal{H}$ that stores these strings using \textsc{Exact-Fetch-Prep} (Algorithm~\ref{data-structure}). The structure is an adaptation of~\cite{FKS}, but unlike \cite{FKS} it is symmetric and has two levels. By \emph{symmetric} we mean that a fixed hash function ($h_r$) will be chosen for hashing all elements at the first level, and another hash function ($h_s$) is chosen for hashing all elements at the second level. Such a choice helps with the complexity of oblivious transfer (OT) protocol in Section ~\ref{sec:private-protocols}, where we will essentially use the same data structure. That is, using the same data structure, the client can compute encrypted indices on its end with the knowledge of ($h_r,h_s$) and can use OT to retrieve objects, and efficient implementations for this already exist in practice. The hash functions map elements from $[|D|]$ (for a number $N$ the notation $[N]$ represents the set ${1,...,N}$) to a range of size $L = 16 \cdot |D|$, and are chosen randomly from a $2$-Universal hash function family~\cite{Carter:1977:UCH:800105.803400} $\mathcal{H}_2 = \{h: [U] \rightarrow [L]\}$, where $U$ is a large positive integer (note that these are not locality-sensitive). 

The choice of a two-level hashing is inspired by the treatment in~\cite{FKS} where the authors show that a two level hashing can simultaneously lead to linear storage complexity as well as constant worst case retrieval time complexity compared to single level hashing (where one typically trades off storage space vs retrieval time). In addition, to avoid collisions, the range of a single hash would have to be very large. On the other hand, the two level structure does not require the first hash function to be collision-free, and this helps with the storage vs retrieval tradeoff.

Additionally, we first choose two large integers $r$ and $l$, a string $r'$ of length $l$ and the MD5 hash function \cite{MD5-reference} (denoted as $C_r:\Sigma^{M+l} \rightarrow [2^r]$) to transform the database strings. Once $\mathcal{H}$ is created on the server, it publicly declares the hash functions $h_r$ and $h_s$ as well as the constant string $r'$ it generated. Details of the construction of $\mathcal{H}$ using \textsc{Exact-Fetch-Prep} are shown in Algorithm~\ref{data-structure}.

%%%%%%%%%%%%%%%%%%%%%%%%%%%%%%%%%%%%%%%%%%%%%%%%%%%%%%%%%%%%%%%%%%%%%%%%%%%%%%%%%%%%
\SetKwRepeat{Do}{do}{while}%
\begin{algorithm}[h]
\scriptsize
\DontPrintSemicolon
\KwIn{Database $D$}
\KwOut{Hash table with two level hashes $\mathcal{H}$, hashing functions $h_r,h_s$ and string $r'$}
\Begin
{
  Choose: (a) large positive integers $r$ and $l$, (b) arbitrary string $r'$ of length $l$, and (c) collision resistant cryptographic hash function $C_r: \Sigma^{l+M} \rightarrow 2^r$.\\
  $D_e \leftarrow \phi$\\
  \ForAll{$x' \in D$}
  {
    $x \leftarrow r'\circ x'$ ($\circ$ denotes the concatenation operator)\\
    $D_e \leftarrow D_e \cup \{x\}$\\
  }

  \Do{ $\sum^L_{i=1}{b^2_i} \leq 4 |D|$}{ 

    $h_r \sim$ Uniform($\mathcal{H}_2$) \\
    \ForAll{$i=1,...,L$}
    {
      $B_i = \{x \in D_e :h_r(x)=i\}$\\
      $b(i) \leftarrow |B_i|$
    }
  }

  \Do{$\forall\; 1 \leq i \leq L$ and $\forall\; x,y \in B_i, h_s(x) \neq h_s(y)$ }{ 

    $h_s \sim$ Uniform($\mathcal{H}_2$)
  }

  Initialize array $\mathcal{H}$ of size $L^2 + L$.\\
  \ForAll{$x \in D_e$}{
    $\mathcal{H}$[$L \cdot h_r(x) + h_s(x)$] = $\{C_r(x) \textrm{ and other data associated with } x\}$  
 
  }

  return $(\mathcal{H},h_s,h_r,r')$
}

\caption{\textbf{\textsc{Exact-Fetch-Prep}($D$)}: Creating two-level data structure $\mathcal{H}$}
\label{data-structure}
\end{algorithm}
\setlength{\textfloatsep}{0pt}
%%%%%%%%%%%%%%%%%%%%%%%%%%%%%%%%%%%%%%%%%%%%%%%%%%%%%%%%%%%%%%%%%%%%%%%%%%%%%%%%%%%%

We now show two properties that \textsc{Exact-Fetch-Prep} satisfies. First, it identifies the two hash functions $h_r, h_s$ that lead to no collisions with high probability. And second, it constructs $\mathcal{H}$ in expected polynomial time and uses $O(|D|)$ storage. In particular, building on the analysis in \cite{FKS}, the probability for random hash functions $h_r$ and $h_s$ to be successful (i.e., have no collisions) in the first and second stages of \textsc{Exact-Fetch-Prep} can be bounded as follows.

\begin{lemma} (1) For $h_r \in \mathcal{H}_2$,  $Pr[\sum^{16 |D|}_{i=1} b^2_i \leq 4 |D|] \geq \frac{1}{2}$, where $b_i$ corresponds to the counts of collisions in each hash bucket $i$. (2) Functions $h_r,h_s \in \mathcal{H}_2$ succeed with no collisions with a probability $\geq 3/4$. %\label{secondfunction}
\end{lemma}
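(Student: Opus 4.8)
The plan is to prove both parts by the standard second-moment argument underlying the FKS scheme, using only the $2$-universality of $\mathcal{H}_2$ and Markov's inequality. Throughout I note that the set being hashed is $D_e$, with $|D_e| = |D|$, and that the range size is $L = 16|D|$.

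For part (1), I would first rewrite the load function $\sum_i b_i^2$ in terms of colliding pairs. Since $b_i$ is the number of elements of $D_e$ landing in bucket $i$ under $h_r$ and $\sum_i b_i = |D|$, the identity $\sum_{i=1}^{L} b_i^2 = |D| + 2\sum_i \binom{b_i}{2}$ holds, where $\sum_i \binom{b_i}{2}$ is exactly the number of unordered pairs $\{x,y\} \subseteq D_e$ with $h_r(x) = h_r(y)$. By $2$-universality, each such pair collides with probability at most $1/L = 1/(16|D|)$, so by linearity $E\big[\sum_i \binom{b_i}{2}\big] \leq \binom{|D|}{2}/L \leq |D|/32$. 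Hence $E\big[\sum_i b_i^2\big] \leq |D| + (|D|-1)/16 < 2|D|$. Applying Markov's inequality to the nonnegative random variable $\sum_i b_i^2$ then yields $Pr[\sum_i b_i^2 \geq 4|D|] \leq \tfrac{2|D|}{4|D|} = \tfrac12$, which is part (1).

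For part (2), I would condition on a choice of $h_r$ for which the first do-while loop terminated, i.e.\ $\sum_i b_i^2 \leq 4|D|$. This freezes the partition $\{B_i\}$ and in particular the set of within-bucket pairs, whose count is $\sum_i \binom{b_i}{2} = \tfrac12\big(\sum_i b_i^2 - |D|\big) \leq \tfrac12(4|D| - |D|) = \tfrac{3|D|}{2}$. Because $h_s$ is drawn from $\mathcal{H}_2$ independently of $h_r$, $2$-universality bounds the probability that any fixed within-bucket pair collides under $h_s$ by $1/L = 1/(16|D|)$, so the expected number of second-level collisions is at most $\tfrac{3|D|}{2}\cdot\tfrac{1}{16|D|} = \tfrac{3}{32} \leq \tfrac14$. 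Markov then gives that some within-bucket pair collides under $h_s$ with probability at most $\tfrac14$; equivalently, $h_s$ separates every bucket, so the combined map $x \mapsto (h_r(x), h_s(x))$ is injective on $D_e$, with probability at least $\tfrac34$.

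The collision-pair counting and the two Markov applications are routine. The one point requiring care is the conditioning in part (2): I must make explicit that $h_s$ is sampled independently of $h_r$, so that fixing a good $h_r$ fixes the bucket sizes $b_i$ (hence the within-bucket pair set) while leaving the randomness of $h_s$ intact — this independence is exactly what licenses applying $2$-universality to $h_s$ over the now-deterministic pair set. A secondary subtlety is interpretational: the two stated probabilities are the per-trial success probabilities of the two independent do-while loops (resampling $h_r$ until the load bound holds, then resampling $h_s$ until no collision), which makes the expected number of resamplings $O(1)$ and underlies the claimed expected-polynomial-time construction with $O(|D|)$ storage.
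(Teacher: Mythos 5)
Your proof is correct, and for part (2) it is essentially the paper's own argument: both count the within-bucket colliding pairs under $h_s$, bound the expected number via $2$-universality using $\sum_i b_i^2 \leq 4|D|$, and finish with Markov (the paper bounds the expectation by $\sum_i b_i^2/(32|D|) \leq 1/8$ and applies Markov at threshold $1/2$; you use the slightly tighter $\sum_i \binom{b_i}{2} \leq 3|D|/2$ and apply Markov at threshold $1$, giving $29/32 \geq 3/4$). For part (1) the paper gives no argument at all, merely deferring to the FKS analysis, so your explicit computation --- the identity $\sum_i b_i^2 = |D| + 2\sum_i\binom{b_i}{2}$, the bound $E[\sum_i b_i^2] < 2|D|$, and Markov at $4|D|$ --- supplies a step the paper leaves implicit; your explicit note that $h_s$ is drawn independently of the frozen partition $\{B_i\}$ is likewise a point the paper glosses over but needs.
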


\begin{proof} Proof of Part (1) is similar to the analysis in ~\cite{FKS}. For Part (2), we have the following.

Define the following random variable:
\begin{align*}
X_i = |&\left\{(x,y)|\; x \neq y, x,y \in \{p_j: 1 \leq j \leq |D|\} \right.,\\
 & \left.h_r(x) = h_r(y) = i, h_s(x) = h_s(y)\right\}|.
\end{align*}

Further, let $X = \sum_i^{16 \cdot |D|} X_i$. If $X > 0$, then the two level hashing of Algorithm \ref{data-structure} fails.  If we show that $Pr [|X| \le 1/2] \geq 1-1/4 = 3/4$, then it would imply that $|X| = 0$ (as $X$ is a natural number) and the two level hashing succeeds with high probability. 

For a randomly chosen hash function $h_s \in \mathcal{H}_2$, we have that for any $x,y$, $Pr [h_s(x) \allowbreak=\allowbreak h_s(y)] = {\frac{1}{16 |D|}}$. We estimate the expected value of random variable $X$:
 
\begin{align*}
E[X_i] \leq {|B_i| \choose 2} {\frac{1}{16 |D|} = \frac{|B_i|(|B_i|-1)}{32 \cdot |D|}},
\end{align*}
where $B_i$ is the hash bucket corresponding to index $i$ with size $b_i$.

Thus, the expected number of colliding pairs summed up over all buckets is  
\begin{align*}
E[X] = E[\sum X_i] \leq {\sum_{i=1}^{16|D|} \frac{ b_i^2}{32 \cdot |D|}}.
\end{align*}
  
Since $\sum_{i=1}^{16|D|} b_i^2 \leq 4|D|$, therefore $E[X] \leq \frac{1}{8}$ and $Pr [X \geq  1/2] \leq \frac{E[X]}{1/2}$. Or, $Pr [X \le 1/2] \geq 1-1/4 = 3/4$.
\end{proof}

\begin{lemma}
The data structure $\mathcal{H}$, can be constructed by \textsc{Exact-Fetch-Prep} in expected polynomial time.
\end{lemma}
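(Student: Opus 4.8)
The plan is to decompose \textsc{Exact-Fetch-Prep} into its constituent phases, bound the expected number of iterations of the two rejection-sampling (\texttt{do-while}) loops using the preceding lemma, and argue that every other step runs in deterministic polynomial time; multiplying the expected iteration counts by the (fixed polynomial) per-iteration cost then yields an expected polynomial bound overall.

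First I would dispatch the deterministic, non-looping work. Constructing $D_e$ by prepending $r'$ to each string is $O(|D|)$ concatenations, each polynomial in $l$ and $M$; allocating the array $\mathcal{H}$ of size $L^2 + L$ with $L = 16|D|$ takes $O(|D|^2)$ time; and the final population loop performs one insertion per element of $D_e$, i.e. $O(|D|)$ work, where each insertion evaluates $C_r$, $h_r$, $h_s$ and writes a single slot. All of these are polynomial in the relevant input parameters ($|D|$, $M$, $l$, $r$, $\log U$).

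Next I would bound the two loops via the geometric-trials argument. Each trial of the first loop draws a fresh $h_r \sim \mathrm{Uniform}(\mathcal{H}_2)$ independently, evaluates it on all strings of $D_e$, forms the buckets $B_i$ with sizes $b_i$, and tests $\sum_{i=1}^{L} b_i^2 \leq 4|D|$; this is $O(|D| + L) = O(|D|)$ work per trial. By Part (1) of the previous lemma each trial succeeds with probability $\geq 1/2$, so the number of trials is stochastically dominated by a geometric random variable of success probability $1/2$, with expectation $\leq 2$; hence the first loop costs $O(|D|)$ in expectation. Similarly, each trial of the second loop draws a fresh $h_s$, evaluates it on all strings, and checks that no two distinct elements sharing a first-level bucket collide under $h_s$, which (using the guarantee $\sum_i b_i^2 \leq 4|D|$ already forced by the first loop's termination condition) is $O(|D|)$ work. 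By Part (2) of the previous lemma each such trial succeeds with probability $\geq 3/4$, so the expected number of trials is $\leq 4/3$ and the second loop also costs $O(|D|)$ in expectation.

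The step I expect to require the most care is the composition of the two nested rejection samplers: the success-probability guarantee for $h_s$ is valid only once a good $h_r$ has been fixed (so that $\sum_i b_i^2 \leq 4|D|$ holds deterministically). Because the first loop terminates precisely when this inequality holds, conditioning on its termination is clean, and the expected cost of the second loop can be analyzed as a geometric number of independent trials given the fixed $h_r$. Combining the phases by linearity of expectation, the total expected running time is dominated by the $O(|D|^2)$ array initialization and is therefore polynomial; the independence of successive hash draws from $\mathcal{H}_2$ is what makes the geometric-iteration argument, and hence the whole bound, go through.
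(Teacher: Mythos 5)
Your proof is correct and follows exactly the argument the paper intends but leaves implicit (the lemma is stated without proof, relying on the preceding collision-probability lemma and the FKS-style analysis): the two rejection-sampling loops each terminate after an expected constant number of independent trials (success probabilities $\geq 1/2$ and $\geq 3/4$ respectively), each trial costs polynomial time, and all remaining work is deterministic polynomial, with your observation that the second loop's guarantee is only valid conditional on the first loop having enforced $\sum_i b_i^2 \leq 4|D|$ being precisely the right point of care. Nothing further is needed.
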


\subsection*{Query Stage}

Given $\mathcal{H}$ on the server-side, we do a linear scan on the server-side at query-time to retrieve the consequents of applicable rules. A basic building block that is used in the linear scan is the retrieval of a single element from $\mathcal{H}$. We discuss this first.

To query whether a string $str$ is present in $\mathcal{H}$, one can compute the following quantities: $x = r'\circ str$ ($\circ$ denotes the concatenation operator), and index $i = L\cdot h_r(x) + h_s(x)$. We can then fetch the indexed element $\mathcal{H}[i]$ including one of its attributes $\mathcal{H}[i].C_r(x')$ (here $x'$ corresponds to the string present at location $i$). This attribute can be used to verify if $str$ was indeed present in the database. If a client sends a query for the presence of an element $str$ to a server that only holds $\mathcal{H}$, then the client can have limited privacy. In particular, the server does not know what $str$ is, although it knows the index $i$ and the element was returned (which may not contain the $str$ itself). Since the hashes are not invertible, it affords partial privacy as the client is not revealing its string $str$. \textsc{Exact-Fetch-Query} (Algorithm~\ref{algo:Exact-Fetch-Query}) implements this query process. 

%%%%%%%%%%%%%%%%%%%%%%%%%%%%%%%%%%%%%%%%%%%%%%%%%%%%%%%%%%%%%%%%%%%%%%%%%%%%%%%%%%%%
\begin{algorithm}[h]
\scriptsize
\DontPrintSemicolon
\KwIn{Query string $str$, and data structure $\mathcal{H}$ from Algorithm~\ref{data-structure}}
\KwOut{Value in $\mathcal{H}$ corresponding to query}
\Begin
{
  Client $\client$ computes $x = r' \circ str$, $h_r(x)$ and $ h_s(x)$. \\
  $\client$ computes $i = L \cdot h_s(x) + h_r(x)$.\\
  $\client$ queries server $\server$ for entry at index $i$ in $\mathcal{H}$.
  Server returns $\mathcal{H}$[i].
}

\caption{\textbf{\textsc{Exact-Fetch-Query}($str$)}: Query data structure $\mathcal{H}$}
\label{algo:Exact-Fetch-Query}
\end{algorithm}
\setlength{\textfloatsep}{0pt}
%%%%%%%%%%%%%%%%%%%%%%%%%%%%%%%%%%%%%%%%%%%%%%%%%%%%%%%%%%%%%%%%%%%%%%%%%%%%%%%

The algorithm description for querying the server under the \textsc{TOP-Assoc}($k,w,t,f$) criterion is provided in Algorithm~\ref{algo:Exact-TOP-Assoc} (the algorithms for \textsc{TOP-1-Assoc}, \textsc{TOP-K-Assoc}, \textsc{ALL-Assoc} and \textsc{ANY}-\textsc{Assoc} are similar, hence omitted).

%%%%%%%%%%%%%%%%%%%%%%%%%%%%%%%%%%%%%%%%%%%%%%%%%%%%%%%%%%%%%%%%%%%%%%%%%%%%%%%%%%%%
\begin{algorithm}[h]
\scriptsize
\DontPrintSemicolon
\KwIn{Transaction $\transaction$, data structure $\mathcal{H}$ from Algorithm~\ref{data-structure}, threshold weight $w \in \mathbb{Z}_{+}$, antecedent length parameter $t \in \mathbb{Z}_{+}$, output size parameter $ k \in \mathbb{Z}_{+}$, and ordering function $f$.}
\KwOut{Set of $k$ consequents of applicable association rules $\mathcal{L}$}
\Begin
{
Initialize $\mathcal{L}_{\textrm{all}} \leftarrow \phi$\\
  \textbf{Do parallel}\{\\
  \ForAll{$p_i\subset \transaction$ \& $|p_i|\leq t$ \& $w_i\geq w$}
  {
    $q \leftarrow$ \textsc{Exact-Fetch-Query}$(p_i)$ (Algorithm~\ref{algo:Exact-Fetch-Query}, ignoring the client/server distinction)\\
    \eIf{$q = \phi$}
    {
      continue
    }
    {
      $\mathcal{L}_{\textrm{all}}.\textrm{add}(q)$
    }
  }
  \}\textbf{End parallel}\\
  %return $\mathcal{L}$
  $\mathcal{L} \leftarrow$ first $k$ elements from sort($\mathcal{L}_{\textrm{all}},f$) (break ties arbitrarily if needed)\\
  return $\mathcal{L}$
}
\caption{\textbf{\textsc{Exact-TOP-Assoc-Query}($\transaction,\mathcal{H}$)}: Query with \textsc{TOP-Assoc} criterion}
\label{algo:Exact-TOP-Assoc}
\end{algorithm}
\setlength{\textfloatsep}{0pt}
%%%%%%%%%%%%%%%%%%%%%%%%%%%%%%%%%%%%%%%%%%%%%%%%%%%%%%%%%%%%%%%%%%%%%%%%%%%%%%%%%%%%

\section{Privacy-preserving Protocols}
\label{sec:private-protocols}

We address three related privacy-preserving tasks in sequence. First, we discuss how the \emph{Oblivious transfer} protocol is a solution to privacy-preserving database query problem. We present how consequents of all applicable association rules for a given criterion can be fetched and collated in privacy preserving manner. Finally, we discuss how the protocol can be extended to the setting when we use the approximate algorithms from Section~\ref{sec:approxalgo} for fetching.

\subsection*{Private Protocol for Database Lookup}
  Consider the following two party task: a client $\client$ has an index $i$, and a server $\server$ has a database $D$ represented as a vector $\overrightarrow{v}[1:|D]$ of $|D|$ elements. The client's goal fetch the $i^{th}$ element $\overrightarrow{v}[i]$ such that: (a) the client learns nothing more than the element it fetched from the server, and (b) the server learns nothing about client's query.  Specifically, this leads to the following definition for oblivious transfer (OT):

%What is Sim, ViewC, ViewS?
\begin{definition}
\label{oblivioustransfer}
  An oblivious transfer(OT) protocol is one in which $\client$ retrieves the $i^{th}$ element from $\server$ holding $[1,\dots,n]$ elements \emph{iff} the following conditions hold:
\begin{mylist}
\item The ensembles $View_S(\server(\overrightarrow{v}),\client(i))$ and $View_S(\server(\overrightarrow{v}),\client(j))$ are computationally indistinguishable for all pairs $(i,j)$, where the random variable $View_S$ refers to the transcript of the server created by the execution of the protocol.
\item There is a (probabilistic polynomial time) simulator $Sim$, such that for any query element $c$, the ensembles 
$\allowbreak {Sim(c,\overrightarrow{v}[c])}$ and ${View_C(\server(\overrightarrow{v}),\allowbreak \client(c))}$ are computationally indistinguishable.
\end{mylist}
\end{definition}

  We use the notation OT[$\mathcal{C}:i$, $\mathcal{S}:[1,\ldots,|D|]$] to represent the above Protocol in Definition~\ref{oblivioustransfer}. Without going into the details of OT implementation, we make the design choice to use a fast and parallel implementation described in ~\cite{lipmaaImpl}. This scheme is based on length preserving additive homo-morphic encryption, described next. Homo-morphic encryption with public key $pk$, of message $m$, is denoted as $c=E_{pk}(m)$. Decryption with private key $sk$ is denoted as $m=D_{sk}(c)$. Any operation over the cipher text, will also be reflected in the decrypted plain text. For instance, let $c_1$ and $c_2$ be two cipher texts such that $c_1=E_{pk} (m_1)$ and $c_2=E_{pk} (m_2)$. Let $+$ represent a binary operation. Then, $c_1+c_2=E_{pk} (m_1+m_2)$. Further-more, the scheme is length preserving so that an $l$-bit input is mapped to an input of size $l+N$, where $N$ is a constant.

  We now discuss how to answer the question of whether a string $str$ is in $D$ in a privacy-preserving manner. Recall the data structure $\mathcal{H}$ output by \textsc{Exact-Fetch-Prep} (see Section~\ref{sec:exactalgo}) in which $\client$ fetches an element from database stored with $\server$. We can readily get a stronger privacy-preserving protocol for the same, by employing the OT protocol from Definition~\ref{oblivioustransfer}. That is, by a single execution of OT between client $\client$ and server $\server$, $\client$ can privately fetch the $C_r$-hash of string $r' \circ str$, stored in record
$\mathcal{H}[D \cdot h_r(r' \circ str) + h_s(r' \circ str)].C_r()$. Note that for two strings $str, str'$, it may be that $h_r(r' \circ str) = h_r(r' \circ str')$, and $h_s(r' \circ str) = h_s(r' \circ str')$. Yet the corresponding $C_r$-hashes of $str$ and $str'$ may not be equal. Thus, the choice of using a $C_r$-hash leads to the following guarantee on the OT based protocol.

\begin{lemma}
\label{protocol-guarantee}
  There exists a two party protocol \textsc{Private-Exact-Fetch-Query} such that: (1) $\client$ learns whether $str \in D$ with high probability given the description of associated	hash functions $(h_r, h_s)$, and (2) the computationally bounded $\server$ learns nothing.
\end{lemma}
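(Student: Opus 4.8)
The plan is to exhibit the protocol explicitly and then verify its two claimed properties separately. The protocol \textsc{Private-Exact-Fetch-Query} would run as follows: using the publicly declared quantities $r'$, $h_r$, and $h_s$, the client $\client$ locally forms $x = r' \circ str$ and computes the single index $i = L \cdot h_r(x) + h_s(x)$ into $\mathcal{H}$, exactly as in \textsc{Exact-Fetch-Query}. The client and server then run a single invocation of OT[$\client:i$, $\server:[1,\dots,L^2+L]$], so that $\client$ privately fetches the record $\mathcal{H}[i]$, in particular its $C_r$-field. Finally $\client$ recomputes $C_r(x)$ locally and declares $str \in D$ if and only if the fetched field equals $C_r(x)$.

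For part (1) I would establish soundness and completeness of this membership test. If $str \in D$, then during preprocessing $x = r' \circ str$ was inserted at index $i$, and by the collision-freeness of the two-level hashing (which holds with probability $\geq 3/4$ by the earlier collision lemma, and can be boosted by repetition) that slot holds precisely $C_r(x)$; hence the equality test succeeds and there are no false negatives. If $str \notin D$, then $\mathcal{H}[i]$ is either empty, in which case absence is reported directly, or is occupied by the record of some distinct $str'' \in D$ that hashed into slot $i$. In the latter case the fetched field is $C_r(r' \circ str'')$ with $r' \circ str'' \neq x$; since $C_r$ is collision-resistant, this differs from $C_r(x)$ except with negligible probability, so the mismatch is detected and no false positive occurs. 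The overall high-probability guarantee then follows by a union bound over the two failure events: a two-level hashing collision and a $C_r$ collision.

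For part (2), privacy against a computationally bounded $\server$, I would first observe that every step in which $str$ appears, namely forming $x$, evaluating $h_r(x)$ and $h_s(x)$, assembling $i$, and the final comparison, is performed locally by $\client$ on public inputs, so nothing about $str$ leaves the client outside the OT call. The only message traffic is the OT transcript, and by condition (1) of Definition~\ref{oblivioustransfer} the server's view $View_S$ is computationally indistinguishable for any two query indices. Thus no efficient server strategy can distinguish the index induced by $str$ from that induced by any other string, so $\server$ learns nothing; formally, any distinguisher on the server's view would yield a distinguisher contradicting the security of the OT scheme (realized here by the length-preserving additively homomorphic construction of~\cite{lipmaaImpl}).

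The step I expect to be the crux is the completeness half of part (1): ruling out false positives. Membership cannot be read off from the index $i$ alone, because a non-member string may hash to an occupied slot; it is precisely the cryptographic $C_r$-field that disambiguates a genuine hit from such a spurious collision, and making this rigorous requires combining the statistical collision-freeness of $h_r, h_s$ with the cryptographic collision-resistance of $C_r$ into a single high-probability statement. The privacy direction, by contrast, reduces almost immediately to the OT definition, once one notes that the index is computed entirely client-side.
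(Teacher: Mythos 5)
Your proposal is correct and follows essentially the same route as the paper, which justifies the lemma only by the informal paragraph preceding it: a single OT invocation on the index $L\cdot h_r(r'\circ str)+h_s(r'\circ str)$ to fetch the stored $C_r$-hash, with the collision resistance of $C_r$ disambiguating genuine membership from hash-bucket collisions, and the OT indistinguishability property giving server-side privacy. The only minor quibble is that collision-freeness of $(h_r,h_s)$ on the database elements is guaranteed deterministically by the rejection loops in \textsc{Exact-Fetch-Prep} (the $3/4$ bound governs the expected construction time, not the correctness of the built table), so your "no false negatives" direction is actually unconditional rather than high-probability.
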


\subsection*{Private Protocol based on Exact Algorithms}
  A client computes an ordered list of recommended items from a set of consequents of all applicable association rules, chosen according to some selection criteria. We break down the process of making this recommendation process private into the following subtasks.

\begin{mylist}
\item{}	\textit{Expunge infrequent items and anonymize item list:} Firstly, note that only a few items from the client's transaction may be frequent and belong to any rule. So, it is important for the client to remove all infrequent items from its transaction before further processing. The task (denoted \textsc{Preprocess}, see Algorithm~\ref{algo:preprocess}) is to remove  infrequent items, and anonymize the input transaction of the client. We assume that the initial list of items are given identifiers from the range $[|\itemset|]$, which are publicly available (hence available to the client).

\item{} \textit{Privately fetch and privately, interactively collate applicable association rules:} We need to select applicable rules according to the given criterion, and given the consequents of these applicable rules along with their respective weights, we need to privately collate them to produce a list of recommended items using these weights. For this, the client is given a list of identities, with associated weights (which are homo-morphically encrypted), that are obtained from the selection of rules. Client $\client$ and server $\server$ interactively execute a two-party private sorting (denoted \textsc{Private-Two-Party-Sort}, see Algorithm~\ref{algo:P1}) to sort the list of items using their encrypted weights, and the client finally produces an ordered list of recommended items at its end, sorted according to their weights.

\item{} \textit{De-anonymize and recommend:} Given a final list of $k'$ anonymized item identities, we de-anonymize them to obtain the actual names of the recommended items. For this, the client $\client$ fetches their actual identifiers by executing OT (see Definition~\ref{oblivioustransfer}), with the server $\server$ (similar to \textsc{Preprocess} above) on the reverse mapping ($RT$, see Algorithm~\ref{algo:preprocess}), and obtain the true identifiers of the items to be recommended.
\end{mylist}

The above steps are captured in Protocol~\ref{algo:pExact-ALL-Assoc}, which builds on the exact implementation of \textsc{ALL-Assoc}($w,t$) from Section~\ref{sec:exactalgo}. For brevity, we discuss the special case when $t = |\itemset|$. This protocol makes use of the \textsc{Private-Exact-Fetch-Query}($p$) algorithm (see above) as a subroutine. After its execution, the client $\client$ fetches all association rules with weights $\geq w$ in a privacy preserving manner, and from these rules, it collates the list of recommended items. The private versions of the exact implementations of \textsc{TOP-Assoc}($k,w,t$), \textsc{TOP-1-Assoc}($f$), \textsc{TOP-K-Assoc}($f$) and \textsc{ANY-Assoc}($k,w,t$) can be designed in a similar manner. 

We note that the above process ensures privacy of the client data with respect to the server, and the privacy of server's data with respect to the client, by only revealing the relevant consequents of association rules to the client. A much simpler privacy preserving protocol may be devised, if only the privacy of the clients data (transaction $\transaction$) is to be guaranteed, with respect to the server.

%%%%%%%%%%%%%%%%%%%%%%%%%%%%%%%%%%%%%%%%%%%%%%%%%%%%%%%%%%%%%%%%
\begin{protocol}[h]
\scriptsize
\DontPrintSemicolon
\KwIn{$\mathcal{C}$ holds Transaction set $\transaction$ and $\mathcal{S}$ holds itemset $\mathcal{I}$}
\KwOut{Item identifiers}
%\KwOut{$t = |T|$}
\Begin
{
  Server Preprocessing
	\Begin
  {
		$\pi \xleftarrow[permutation]{Random}$ from $[1,\dots,|\itemset|]$.\\
		$T \leftarrow$ table with $|\itemset|+1$ entries\\
		Store $\pi$ in $T$ such that $T[i] = \pi(i)$.\\
		Map item $\mathcal{I}' \in \mathcal{I}$ to $inf$, where $freq(\mathcal{I}')$ \textless $\theta$.\\
    $T[inf] \leftarrow 0$.\\
		Let $RT$ be the reverse map i.e., $RT \circ T (i) = i$, $\forall$ frequent items.\\
	}

	$\mathcal{C}$ has transaction $\transaction=\{i_1,i_2,i_3,\dots,i_{|T|}\}$ comprising of all table entries and $\mathcal{S}$ has the table $T$.\\
	$\mathcal{C}$ and $\mathcal{S}$ execute OT (see Definition~\ref{oblivioustransfer}), with $\mathcal{C}$ seeking table entries $T[i_i, i_2,\dots, i_{|\transaction|}]$\\
	$\mathcal{S}$ randomly permutes the output of the OT before sending to the client $\client$.\\
	$\mathcal{C}$ receives the outputs of the OT from $\mathcal{S}$, decrypts all outputs, and discards all $inf$ entries, corresponding to infrequent items.\\
}
\caption{\bf \textsc{Pre-processing and Anonymization}}
\label{algo:preprocess}
\end{protocol}
\setlength{\textfloatsep}{0pt}
%%%%%%%%%%%%%%%%%%%%%%%%%%%%%%%%%%%%%%%%%%%%%%%%%%%%%%%%%%%%%%%%

%%%%%%%%%%%%%%%%%%%%%%%%%%%%%%%%%%%%%%%%%%%%%%%%%%%%%%%%%%%%%%%%%%%%%%%%%%%%%%%%%%%%
\begin{protocol}[h]
\DontPrintSemicolon
\scriptsize
\KwIn{Client: Transaction $\mathcal{T}$} 
\KwIn{Server: Threshold weight $w$, data structure $\mathcal{H}$ containing $D$ association rules}
\KwOut{Client: Set of recommended items $\mathcal{I}_{\textrm{rec}}$}
\Begin
{
  $\mathcal{C} \leftrightarrow \mathcal{S}$ \textsc{Preprocess} for anonymizing $\mathcal{T}$\\
  $\mathcal{C} \xleftarrow{\mathcal{A}[1..t]} \mathcal{S}$ : Call \textsc{Private-Exact-Fetch-Query} (Lemma~\ref{protocol-guarantee}) for $t=|T|$ times.\\
  $\mathcal{C} \leftrightarrow \mathcal{S}$ : Execute \textsc{Private-Two-Party-Sort} (Algorithm~\ref{algo:P1}) based on whether the weight of the associated rule is $\geq w$.\\
  $\mathcal{C} \xleftarrow{\mathcal{L}} \mathcal{S}$; where $\mathcal{L}$ is the list of association rules sorted according to their weights\\
  $\mathcal{C}$ collates the consequents of rules in $\mathcal{L}$, and calculates $\mathcal{I}_{\textrm{rec}}$\\
}
\caption{\bf \textsc{Private-Exact-ALL-Assoc-Query}($w,t=|\itemset|$)}
\label{algo:pExact-ALL-Assoc}
\end{protocol}
\setlength{\textfloatsep}{0pt}
%%%%%%%%%%%%%%%%%%%%%%%%%%%%%%%%%%%%%%%%%%%%%%%%%%%%%%%%%%%%%%%%%%%%%%%%%%%%%%%%%%%%

\myparagraph{Details of Privacy-preserving Sorting Protocol} We now discuss the details of \textsc{Private-Two-Party-Sort} mentioned above. It is based on a primitive that makes $n^{1.5}$ pairwise comparisons, which are chosen at the pre-processing phase and dependent only on value of $n$ (in our case, $n = |D|$). Using \cite{AKS}, $\mathcal{S}$ produces the identities of the $m=c \cdot n^{1.5}$ pairs, knowing the comparisons of which one can execute the oblivious sorting algorithm \textsc{Private-Two-Party-Sort} in one shot. It is detailed in Protocol~\ref{algo:P1}, and takes only two rounds of communications, with a communication complexity of $O(n^{1.5})$ per round. We recap the properties of the primitive used in the protocol below.

\begin{theorem} [\cite{AKS}] There exists a deterministic pair of algorithms $(AKS_1,\allowbreak AKS_2)$, which satisfy the following:
\begin{mylist}
\item{} Given input $n$, $AKS_1$ produces a list $L_n$ of $O(n^{1.5})$ pair of indexes $(i,j)$.
\item{} Given an input list of integers $I_n = a_1,a_2,\dots,a_i,\dots,a_n$ and value of comparisons of $(a_i,a_j)$ for all $(i,j) \in L_n, |L_n| = n^{1.5}$, deterministic algorithm $AKS_2$ sorts the input list $I_n$.
\end{mylist}
\end{theorem}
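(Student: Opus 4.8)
The plan is to treat this as a recap of the classical ``sorting in two rounds'' result: I would exhibit the pair $(AKS_1, AKS_2)$ concretely and then verify the two asserted properties independently --- the $O(n^{1.5})$ bound on $|L_n|$ (part 1), and the fact that the recorded comparison outcomes determine the sorted order so that $AKS_2$ can output it (part 2). The exponent $1.5$ is the tell-tale signature of two-round comparison sorting (where the bound is $\Theta(n^{3/2})$), rather than of a depth-$O(\log n)$ network, so the construction I would target is the two-round scheme, with $AKS_1$ emitting the comparison pairs and $AKS_2$ reconstructing the order from their outcomes.

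The step I expect to be the main obstacle --- and the one the statement quietly elides --- is that a genuinely \emph{fixed}, data-independent set $L_n$ of boolean pairwise comparisons cannot pin down the total order unless it contains essentially all $\binom{n}{2}$ pairs. If some pair $(i,j)\notin L_n$, an adversary can assign values so that $i$ and $j$ have adjacent ranks and then swap just those two values: every comparison in $L_n$ keeps its outcome (any third element lies either below both or above both of $i,j$), the pair $(i,j)$ is never tested, yet the relative order of $i$ and $j$ flips. Hence the $O(n^{1.5})$ bound is attainable only if the comparisons are \emph{adaptive across rounds}: $L_n$ must be read as a fixed first-round graph $L_n^{(1)}$ together with a second-round set $L_n^{(2)}$ that is selected only after the first round's outcomes are known. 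This is exactly why the surrounding protocol needs two communication rounds, and it is the crux that any honest proof must confront.

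Concretely, for part 1 I would let $L_n^{(1)}$ be a $\Theta(\sqrt{n})$-regular comparison graph, taken explicitly (e.g. from an incidence structure such as a projective plane or a resolvable design) so that $AKS_1$ stays deterministic, giving $|L_n^{(1)}| = O(n^{1.5})$ comparisons. From the first-round outcomes one computes, for each element, an estimate of its rank; $L_n^{(2)}$ then compares each element against the $O(\sqrt{n})$ elements whose estimated rank falls inside a window of width $O(\sqrt{n})$ around it, again $O(n^{1.5})$ comparisons, so $|L_n| = O(n^{1.5})$. For part 2 I would argue that after round two every pair of \emph{true}-rank-adjacent elements has been directly compared (their rank estimates differ by $O(\sqrt{n})$, so each lies in the other's window), whence the transitive closure of the recorded outcomes is a complete total order and $AKS_2$ can read off the sorted sequence deterministically.

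The technically demanding piece is showing that the first-round graph localizes \emph{every} element's rank to a window of width $O(\sqrt{n})$ for \emph{every} input --- a worst-case, deterministic guarantee --- since this is what forces the right expansion/design properties on $L_n^{(1)}$ and is where the real combinatorial work of \cite{AKS} lives; the counting in part 1 and the transitive-closure argument in part 2 are comparatively routine once this localization bound is established. I would also take care to reconcile this with the paper's ``fixed pairs, one shot'' phrasing, noting that the correctness guarantee genuinely rests on the two-round adaptive structure rather than on a static snapshot of boolean comparison outcomes.
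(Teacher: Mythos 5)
The paper never proves this statement: it is imported wholesale from the cited reference \cite{AKS} and used as a black box inside \textsc{Private-Two-Party-Sort}, so there is no in-paper argument to measure your reconstruction against. Taken on its own terms, your proposal is a faithful sketch of the ``sorting in two rounds'' result that the $n^{1.5}$ exponent points to (first a fixed $\Theta(\sqrt{n})$-regular comparison graph, then window refinement), and you are right that the whole combinatorial weight sits in the worst-case rank-localization property of the first-round graph; the counting and the transitive-closure step are routine by comparison.

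More importantly, your adversary argument is correct and exposes a real defect in the statement as literally written. If $L_n$ is a single data-independent list of index pairs and the oracle reports comparisons of the \emph{original} values $a_i$ vs.\ $a_j$, then any omitted pair can be made rank-adjacent and swapped without disturbing any recorded outcome, so no $o\bigl(\binom{n}{2}\bigr)$-size $L_n$ can determine the order. The theorem is therefore only salvageable under one of two readings you identify: (i) the two-round adaptive scheme, where the second batch of pairs is chosen after seeing the first batch's outcomes, or (ii) a sorting-network reading in which the pairs are \emph{positions} and each comparison acts on the current (partially permuted) contents rather than on $a_i,a_j$. Neither reading is compatible with the paper's own usage: the surrounding text insists the $m=c\cdot n^{1.5}$ pairs are ``chosen at the pre-processing phase and dependent only on value of $n$,'' and Protocol~\ref{algo:P1} ships all encrypted pairs to the server in one batch and applies $AKS_2$ to the returned outcomes, which is exactly the non-adaptive setting your adversary defeats. (The sorting-network reading also fails operationally here, since the client holds only ciphertexts and cannot perform the intermediate swaps; and the actual AKS network uses $O(n\log n)$ comparators, not $n^{1.5}$, so the quoted bound does not match that interpretation either.) So your ``proof'' is best read not as a different route to the paper's result but as a demonstration that the result, as stated and as deployed, needs either an added round of adaptivity in the protocol or a restatement in terms of a comparator network.
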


%%%%%%%%%%%%%%%%%%%%%%%%%%%%%%%%%%%%%%%%%%%%%%%%%%%%%%%%%%%%%%%%%%%%%%%%%%%%%%%%%%%%
\begin{protocol}[ht]
\scriptsize
\DontPrintSemicolon
\KwIn{$\mathcal{C}$ has pair of values $\mathcal{V}$}
\KwIn{$\mathcal{S}$ has identities}
\KwOut{$\mathcal{C}$ arranges $\mathcal{V}$ in sorted form}
\Begin
{
$\mathcal{C}$\{\\
  Let the pairs of these index of $\mathcal{V}$ be $(x_1,y_1),\dots,(x_m,y_m)$ using $AKS_1$.\\
$Weight(x_i,y_i) \leftarrow (T_{x_i},T_{y_i})$\\
$(T'_{x_i},T'_{y_i}) \xleftarrow{Enc} (T_{x_i},T_{y_i})$\\ 
$(\mathcal{E}_{x_i},\mathcal{E}_{y_i})  \leftarrow (\textsc{Rand}(T'_{x_i}, T'_{y_i})$ (\textsc{Rand} in Algorithm~\ref{algo:randEnc})\\
$P \xleftarrow{sort} (\mathcal{E}_{x_i},\mathcal{E}_{y_i}), i = 1,\ldots,m$\\
\}\\
$\mathcal{C} \xrightarrow{P} \mathcal{S}$\\ 
$\mathcal{S}$: $\mathcal{D} = (d_1,\ldots,d_n), d_i \xleftarrow{Dec}p_i, p_i \in P$ \\
$\mathcal{S}$ : $val \leftarrow$ compare values of $\mathcal{D}$\\ 
$\mathcal{C} \xleftarrow{val} \mathcal{S}$\\ 
$\mathcal{C}$ then applies the $AKS_2$ algorithm to sort $P$.
}
\caption{\textsc{Private-Two-Party-Sort}: Private two party sorting}
\label{algo:P1}
\end{protocol}
\setlength{\textfloatsep}{0pt}
%%%%%%%%%%%%%%%%%%%%%%%%%%%%%%%%%%%%%%%%%%%%%%%%%%%%%%%%%%%%%%%%%%%%%%%%%%%%%%%%%%%%

%%%%%%%%%%%%%%%%%%%%%%%%%%%%%%%%%%%%%%%%%%%%%%%%%%%%%%%%%%%%%%%%%%%%%%%%%%%%%%%%%%%%
\begin{algorithm}[h]
\scriptsize
\DontPrintSemicolon
\KwIn{Data pair ($T_1, T_2$)}
\KwOut{Randomize encrypted data pairs ($\mathcal{E}_1,\mathcal{E}_2$)}
\Begin
{
	Let $T_1, T_2 \in \mathbb{Z}_t$\\
	$(a_1,b_1), (a_2, b_2) \xleftarrow{random} \mathbb{Z}_t$\\
	$\mathcal{E}_1, \mathcal{E}_2\leftarrow (a_1.T_1 + b_1, a_2.T_2 + b_2)$ where $\mathcal{E}\in\mathbb{Z}_t$,
	such that $\mathcal{E}_1, \mathcal{E}_2$ preserves order of $T_1$ and $T_2$ 
}
\caption{\textsc{Rand}: {Randomization of an encrypted data pair}}
\label{algo:randEnc}
\end{algorithm}
\setlength{\textfloatsep}{0pt}
%%%%%%%%%%%%%%%%%%%%%%%%%%%%%%%%%%%%%%%%%%%%%%%%%%%%%%%%%%%%%%%%%%%%%%%%%%%%%%%%%%%%

\subsection*{Private Protocol based on Approximate Algorithms}
\label{subsec:approx}

A few modifications  are needed to the previous protocol when working with data structures designed in Section~\ref{sec:approxalgo}. Recall the functioning of \textsc{Approx-GSCS-Prep} and \textsc{Approx-GSCS-Query}: Server chooses $l$ random maps ($l=\sum_{m=1}^{M}L_m\cdot K_m$), where the $i^{th}$ map $func_i$, maps a set $\transaction$, represented as a characteristic vector $\transaction$ of length $|\itemset|$, to a string $T_i$ of $k$ (say $k=\max_{m=1,...,M}K_m$) bits. Thus, each antecedent $p_i$ of our rules (for $1 \leq i \leq |D|$) is mapped to $l$ strings $p_1,p_2,\dots,p_i,\dots,p_l$ of length $k$ bits each. For an input transaction $\transaction$, an association rule $p \rightarrow q$ is selected if and only if any of the $i$ maps $func_i(\transaction)$ exactly matches $func_i(p)$. We then proceed along the lines of the previous private protocol by doing the following modifications.\\

\noindent \textbf{Pre-processing the $D$ rules}: We create an enhanced database $D_e$ by first choosing $l$ random strings $r_1, r_2, r_3, \dots, r_l \in \{0,1\}^{s}$, where $s$ is a security parameter (a large positive integer). We then concatenate the above random strings to the $l$-maps as follows: $r_1 \circ func_1(p_i), r_2 \circ func_2(p_i), \dots, r_j \circ func_j(p_i),$ $\dots, r_l \circ func_l(p_i)$, for each $1 \leq i \leq |D|$. The new database $D_e$ has $l \cdot |D|$ elements, each of which stores all relevant information for the ARs. All strings $r_i \circ func_1(p_i)$, along with corresponding consequents $q_i$ in $D_e$ are stored in $\mathcal{H}$ as defined by \textsc{Exact-Fetch-Prep} (Algorithm~\ref{data-structure}).\\

\noindent \textbf{Pre-processing the query:} The client $\client$ obtains the definition of the $l$ maps,	$func_i, i \in \{1,\dots,l\}$, along with the random prefixes $r_i$, which are declared publicly. It then applies the $l$ maps on the characteristic vector $\transaction$, corresponding to its input transaction $\transaction$, and computes $func_i(\transaction)$, from which it prepares $r_i \circ func_i(\transaction)$ for $i \in \{1,\dots,l\}$.\\

\noindent \textbf{Privately receiving answers to the query:} $\client$ queries $\server$ for existence of each string $r_i \circ func_i(\transaction)$, for $i = \{1, 2, \dots, l\}$, with the $\server$ possessing the enhanced data structure via \textsc{Private-Exact-Fetch-Query} that is based on OT (Lemma ~\ref{protocol-guarantee}).\\

\myparagraph{Remarks on Security Analysis} The protocols presented here are two party protocols, executed between a client $\client$ and a server $\server$ and assume that both are \emph{honest-but-curious}. That is, they will execute the given protocols, but can be curious to know more (about the inputs of the other party via the transcripts). The security of these protocols can be defined along the standard Ideal/Real paradigm definitions of security, and hinges on the security of the underlying encryption system. In the Ideal/Real paradigm, an ideal functionality is defined which captures the desired input/output functionality for the two parties. First, the parties submit their inputs to a TP (trusted party), and then receive some outputs. The real protocol is said to realize this ideal functionality, if there is a PPT (probabilistic polynomial time) simulator that can compute a distribution of views of the parties that is indistinguishable from the distribution of views generated in the real process. If this is the case, then the protocol is claimed to be secure. In other words, a user can assume that the designed protocol provides security guarantees that one could imagine, as guaranteed by the Ideal functionality.
 
Proofs of security of the presented protocols can be presented along these above lines. For example, in the retrieval of an element from the data structure one would need to capture what is learned by the server about the client's input $\transaction$ and like-wise by the client about server's database of $D$ association rules. This could be the number of consequents fetched, crypto-hash of some antecedent of fetched association rule, size of the client input etc., and this collection is defined as the outputs of the respective parties. A simulation based proof can then proceed along standard lines. Here, we choose to focus on the formulations of the recommendation problem and approaches to solve it efficiently as well as securely and omit the elaborate proofs of security (which are standard because our parties are honest-but-curious, albeit long and relatively less insightful).

Finally, note that in our solutions, for every new client session the server $\server$ has to re-organize the database (use a new set of hash functions in \textsc{Private-Exact-Fetch-Query}). Otherwise, a client may be able to correlate the information received from the server from multiple sessions and gain `unintended to be shared' information about the association rules. This often leads to more computation by the server per client session, and is unavoidable if this type of information leak needs to be prevented. There is always a trade-off between the information shared about the server's database with client and vice-versa, and computations done by the client/server. Our solution chooses one end of this spectrum, and many other choices are equally valid.

\section{Experimental Evaluation}
\label{sec:eval}
 Our empirical evaluations illustrate that although the computational and latency requirements generally increase, privacy properties can still be guaranteed at roughly the same time scale as the non-private counterparts for reasonably sized problem instances. The goal of the experiments is to validate how the recommendation latencies are influenced as a function of the number of rules, the selection criteria and various other problem parameters. For instance, we know that OT does introduce measurable latency between the client and the server. But as we show below, for moderately sized datasets (~10000 rules for instance), the communication overhead is quite manageable (well within a few seconds), which may be appealing for near-real-time advertising (i.e., where time scales of the order of a few seconds are acceptable). 

\myparagraph{Experimental setup and evaluation metrics} Experiments were conducted on a laptop equipped with a $2.5$ GHz Core i5 processor and $16$ gigabyte of memory running Windows $7$.  All algorithms were implemented in \textsc{Java} version $8$ update $60$ with allocated heap space of $8$ gigabyte. We explored the parameters listed in Table~\ref{tab:metrics} over corresponding ranges to evaluate our algorithms and their private versions. RSA modulus size $\mathcal{N}$ is the key size used in the underlying crypto-system. Increasing $\mathcal{N}$ causes significant reduction in performance while increasing security. To assess this trade-off, we ran experiments using both RSA$1024$ and RSA$2048$. All the experiments were executed $1000$ times to compute the amortized execution times.  

\begin{table}
% \scriptsize
	\centering
	%\resizebox{\columnwidth}{!}{%
		\begin{tabular} {|p{4.2cm}|c|c|c|}
		\hline
		\textbf{Parameter} & \textbf{Symbol} & \textbf{Default value} & \textbf{Range} \\\hline
		RSA modulus size & $\mathcal{N}$ & $1024$ & $\{1024, 2048\}$ \\\hline
		Number of rules & $|D|$ & $10^3$ & $10^3-10^6$ \\\hline
		\textsc{TOP-Assoc} output size & $k$ & $3$ & $3-20$ \\\hline
		\textsc{TOP-Assoc} length & $t$ & $3$ & $3-10$ \\\hline
		Query size & $|\mathcal{T}|$ & $5$ & $5-20$ \\\hline
		\end{tabular}%
	%	}
	\caption{Evaluation Metrics}
	\label{tab:metrics}
\end{table}
\setlength{\textfloatsep}{0pt}

\subsection*{Evaluating the Exact and Approximate Algorithms} 

First, an exact implementation of the \textsc{ALL-Assoc}($w,t$) criterion is evaluated. This criterion was picked because the size of the list of rules output by the exact algorithm is larger than the outputs of the other criteria. Further, the computational burden imposed by parameter $k$ for any $k < |D|$ is negligible in terms of the total processing time. We generated synthetic datasets and evaluated our implementation, whose median processing times are listed in Table~\ref{tab:time}. 
As can be inferred, even when the number of association rules is very large (for instance, see the entry corresponding to $|D|$ equal to $1$ million), our implementation is observed to be very efficient.

\begin{table}
	\centering
	\scriptsize
	\resizebox{\columnwidth}{!}{%
		\begin{tabular} {|c|c|c|c||c|c|c|c||c|c|c|c|}
		\hline
		
  		$|D|$ & $|\mathcal{T}|$ & $t$ & Time &
  		$|D|$ & $|\mathcal{T}|$ & $t$ & Time &  
  		$|D|$ & $|\mathcal{T}|$ & $t$ & Time\\\hline

		\multirow{8}{*}{$100$K} & \multirow{2}{*}{$5$}  & $3$ & $58$  & 
		\multirow{8}{*}{$500$K} & \multirow{2}{*}{$5$}  & $3$ & $229$ &
		\multirow{8}{*}{$1$M} 	& \multirow{2}{*}{$5$} 	& $3$ & $744$ \\
		\cline{3-4}\cline{7-8}\cline{11-12}
		& & $5$ & $69$ & & & $5$ & $273$ & & & $5$ & $931$\\
		\cline{2-4}\cline{6-8}\cline{10-12}

		& \multirow{2}{*}{$10$}	& $3$ & $278$ & & \multirow{2}{*}{$10$}	& $3$ & $436$ & & \multirow{2}{*}{$10$}	& $3$ & $1279$\\
		\cline{3-4}\cline{7-8}\cline{11-12}

		& & $5$ & $232$ && & $5$ & $525$ & & & $5$ & $1413$\\
		\cline{2-4}\cline{6-8}\cline{10-12}

		& \multirow{2}{*}{$15$}	& $3$ & $212$ && \multirow{2}{*}{$15$}	& $3$ & $678$ & & \multirow{2}{*}{$15$}	& $3$ & $1911$\\
		\cline{3-4}\cline{7-8}\cline{11-12}

		& & $5$ & $213$ && & $5$ & $751$ & & & $5$&  $2009$\\
		\cline{2-4}\cline{6-8}\cline{10-12}

		& \multirow{2}{*}{$20$}	& $3$ & $263$ && \multirow{2}{*}{$20$}	& $3$ & $999$ & & \multirow{2}{*}{$20$}	& $3$ & $2615$\\
		\cline{3-4}\cline{7-8}\cline{11-12}

		& & $5$ & $229$ && & $5$ & $987$ & & & $5$ & $2957$\\
		\hline
		\end{tabular}%
		}
	\caption{Execution times (in milliseconds) incurred by the \textsc{Exact-ALL-Assoc}($w,t$) algorithm for fetching applicable rules. Symbols $K$ and $M$ denotes values $10^3$ and $10^6$ respectively.}
	\label{tab:time}
\end{table}

Second, we evaluated \textsc{Approx-GSCS-Query}($\transaction,\mathcal{DS},f$) on two real world transaction datasets: (a) Retail~\cite{retail}, and (b) Accidents~\cite{accidents}. The retail dataset consists of market basket data collected from an anonymous Belgian retail store for approximately 5 months during the period 1999-2000. The number of transactions is $88163$ and the number of items is $16470$. We use SPMF's~\cite{SPMF} implementation of FP Growth algorithm (setting minimum support and minimum confidence values to $0.001$ and $0.01$ respectively) to get $16147$ association rules. The Accidents dataset consists of traffic accidents during the period 1991-2000 in Flanders, Belgium. The number of transactions in this dataset is $340,184$. The attributes capture the circumstances in which the accidents have occurred. The total number of attributes (corresponding to $\itemset$) is $572$. Again, we use SPMF's~\cite{SPMF} implementation of FP Growth algorithm (setting minimum support and minimum confidence values to $0.5$ and $0.6$ respectively) to get $334566$ association rules.  

Note that because the quality of the recommendations depends on the quality of the rules mined, the recommendation accuracy can vary across datasets. For the above two datasets that we used, we held out some of the transactions in a validation set and checked if the recommendations given using the \textsc{ALL-Assoc} criterion had acceptable validation set precision (fraction of items that were correct among the recommended items). We used the confidence of the rules as their weights. We did not cross-validate to get the best parameter choices (weight threshold $w$ or the antecedent length threshold $t$ for our criterion here) as this was not our primary goal.

Instead, we show how the latency overhead due to the private protocol varies as a function of our system parameters below. In particular, column $T_{q}$ of table \ref{tab:approx-and-priv} lists the median processing times for a collection of predefined query transactions for these two datasets (the thresholding parameter for antecedent length, $t$, was varied between $1$ to $5$) in the non-privacy setting. Contrast this with the column $T_{o}$, which lists the median processing times with privacy on a sub-sampled set of rules as shown in column $|D_o|$. The reduced size of the set of rules considered for private fetching is needed to ensure that the latencies are manageable. The sub-sampling was based on the confidence weights of the ARs (rules with higher weights were picked). As can be inferred, these processing times are comparable to the numbers in Table \ref{tab:time} in an absolute sense (seconds vs milliseconds). Columns $A_{10}$, $A_{16}$ and $A_{32}$ provide accuracies of \textsc{Approx-GSCS-Query} with hash lengths ($k = \max_{m=1,...,M}K_m$, see Section~\ref{sec:approxalgo}) set to $10$, $16$ and $32$ bits respectively, averaged over $1000$ queries of length $3$. When the length is only $10$ bits, \textsc{Approx-GSCS-Query} suffers from low accuracy as irrelevant association rules fall under the same buckets. We omit more extensive results on the approximation quality for brevity (see~\cite{shrivastava2015asymmetric,neyshabur2015symmetric} for extensive performance profiling of similar approximation schemes).

\begin{table}
\centering
% \small
% \resizebox{\columnwidth}{!}{%
		\begin{tabular} {|c|c|c|c|c|c|c|c|c|}
		\hline
		Dataset & $|D|$ & $T_{q}$ (ms) & $|D_o|$&$T_{o}$ (s)& $A_{10}$ & $A_{16}$ & $A_{32}$\\\hline
		Accidents & $334$K & $1.69$ & $10$K &$27.4$& $68\%$ & $100\%$ & $100\%$\\\hline
		Retail & $16$K & $1.5$ & $10$K &$28$& $72\%$& $100\%$ & $100\%$\\\hline
\end{tabular}%
% }
\caption{Performance of \textsc{Approx-GSCS-Query}. \label{tab:approx-and-priv}}
\end{table}

\subsection*{Evaluating Timing Overheads due to Privacy}

Table~\ref{tab:priv} documents the timing overhead introduced by a single $1$-$n$ oblivious transfer, which is used to make the exact implementation for \textsc{ALL-Assoc} privacy preserving. The number of rules ($|D|$) was varied from $1$ thousand to $10$ thousand and the RSA modulus was varied between $1024$ and $2048$. Our implementation of the oblivious transfer protocol is single threaded and is based on~\cite{LipmaaLogSq} (a multi-threaded faster implementation can be found in ~\cite{lipmaaImpl}, which can be used as a plug-in module to improve overhead time by a factor of magnitude or higher). From the table we can infer that for moderate sized databases, private fetching of applicable rules is competitive and practical. For instance, to fetch applicable rules from a database with $10^4$ rules, the median time taken is $\sim 40$ seconds for a query of size $5$ (RSA modulus set to $1024$). Practicality is further supported by the fact that in client server settings applicable for many cloud based applications/world wide web, multiple servers will be handling multiple query requests.

\begin{table}
	\centering
	\scriptsize
	\resizebox{\columnwidth}{!}{%
		\begin{tabular} {|c|c|c|c||c|c|c||c|c|c|c||c|c|c|}
		\hline
		
		$\mathcal{N}$ & $|D|$ & $|\transaction|$ & Time & $|D|$ & $|\transaction|$ & Time & $\mathcal{N}$ & $|D|$ & $|\transaction|$ & Time & $|D|$ & $|\transaction|$ & Time\\\hline
		
		\multirow{8}{*}{$1024$} 	& \multirow{4}{*}{$1$K} & $5$  & $3.5$ &  \multirow{4}{*}{$2$K} & $5$  & $6.9$ & \multirow{8}{*}{$2048$} 	& \multirow{4}{*}{$1$K} & $5$  & $25.3$ &  \multirow{4}{*}{$2$K} 	& $5$  & $52.1$\\\cline{3-4}\cline{6-7}\cline{10-11}\cline{13-14}
															& 											& $10$ & $3.6$ & 												& $10$ & $7$  && 											& $10$ & $25.5$ & 												& $10$ & $61$\\\cline{3-4}\cline{6-7}\cline{10-11}\cline{13-14}
															& 											& $15$ & $3.7$ & 												& $15$ & $7.2$&& 											& $15$ & $26$ & 												& $15$ & $50$\\\cline{3-4}\cline{6-7}\cline{10-11}\cline{13-14}
															& 											& $20$ & $3.5$ &                    		& $20$ & $6.8$&& 											& $20$ & $25.8$ &                    			& $20$ & $50.4$\\\cline{2-7}\cline{9-14}
															
															& \multirow{4}{*}{$5$K} & $5$  & $17$  	& \multirow{4}{*}{$10$K}& $5$  & $39.6$&& \multirow{4}{*}{$5$K} & $5$ & $133.1$ & \multirow{4}{*}{$10$K}& $5$  & $266.9$\\\cline{3-4}\cline{6-7}\cline{10-11}\cline{13-14}
															& 											& $10$ & $18.5$ &												& $10$ & $37.6$&& 											& $10$ & $138.7$ &												& $10$ & $248.6$\\\cline{3-4}\cline{6-7}\cline{10-11}\cline{13-14}
															& 											& $15$ & $17.5$ &												& $15$ & $39.1$&& 											& $15$ & $129.8$ &												& $15$ & $256.5$\\\cline{3-4}\cline{6-7}\cline{10-11}\cline{13-14}
															& 											& $20$ & $17.2$ &												& $20$ & $36.5$&& 											& $20$ & $141.8$ &												& $20$ & $268.5$\\\hline 
	\end{tabular}%
	}

	\caption{Overhead times by a private protocol (see Section~\ref{sec:private-protocols}) that embeds the exact implementation of \textsc{ALL-Assoc}($w,t$).}
	\label{tab:priv}
\end{table}

As discussed briefly earlier, the timing overheads incurred by the privacy preserving counterpart of \textsc{Approx-GSCS-Query} for the real datasets is shown in Table \ref{tab:approx-and-priv} (column $T_{o}$). We choose the RSA modulus value to be $1024$ here. Although the processing times are now multiple orders of magnitude compared to vanilla processing times (column $T_{q}$), they are still practical and manageable for an e-commerce setting (again, due to the fact that in practice multiple servers service multiple queries). These times are also comparable to similar sized datasets benchmarked in Table \ref{tab:priv}. Thus, our solutions and their private versions are very competitive in fetching applicable association rules and making item recommendations.  %Note that we have excluded the processing times incurred for the Collate step because the processing times required for getting recommended items from selected applicable rules were much lower than those for the Fetch step.
\section{Conclusion}
\label{sec:conc}

Our work proposes a rich set of methods for selection and application of association rules for recommendations that have strong theoretical basis as well as pragmatic grounding. The ability to reuse association rules that are frequently used in the industry to bootstrap \e{a} scalable and privacy-aware recommendation system makes our solution very attractive to practitioners. Our experiments further highlight the practicality of achieving privacy preserving recommendations for moderate to large-scale e-commerce applications.

\bibliographystyle{myIEEEtran}
\bibliography{arxiv_privacy}
%\appendix
%\input{appendix_nonprivate}
%\input{appendix_private}

%\input{appendix-A}
\end{document}